\DeclareTextFontCommand{\texttt}{\ttfamily\upshape}
\DeclareMathOperator\diag{diag}
\DeclareMathOperator\E{\mathbb{E}}
\DeclareMathOperator\F{F}
\DeclareMathOperator*{\plim}{plim}
\DeclareMathAlphabet\mathbfcal{OMS}{cmsy}{b}{n}
\def\b0{\mathbf{0}}
\def\b1{\mathbf{1}}
\def\bA{\mathbf{A}}
\def\bB{\mathbf{B}}
\def\bD{\mathbf{D}}
\def\bE{\mathbf{E}}
\def\bF{\mathbf{F}}
\def\bG{\mathbf{G}}
\def\bH{\mathbf{H}}
\def\bI{\mathbf{I}}
\def\bM{\mathbf{M}}
\def\bN{\mathbf{N}}
\def\bP{\mathbf{P}}
\def\bQ{\mathbf{Q}}
\def\bV{\mathbf{V}}
\def\bW{\mathbf{W}}
\def\bX{\mathbf{X}}
\def\bZ{\mathbf{Z}}
\def\bb{\mathbf{b}}
\def\be{\mathbf{e}}
\def\bff{\mathbf{f}}
\def\bh{\mathbf{h}}
\def\bw{\mathbf{w}}
\def\bx{\mathbf{x}}
\def\by{\mathbf{y}}
\def\bz{\mathbf{z}}
\def\bbR{\mathbb{R}}
\def\bGamma{\boldsymbol{\Gamma}}
\def\bgamma{\boldsymbol{\gamma}}
\def\bkappa{\boldsymbol{\kappa}}
\def\bbeta{\boldsymbol{\beta}}
\def\bdelta{\boldsymbol{\delta}}
\def\bLambda{\boldsymbol{\Lambda}}
\def\bPhi{\boldsymbol{\Phi}}
\def\bSigma{\boldsymbol{\Sigma}}
\def\bGamma{\boldsymbol{\Gamma}}
\def\bepsilon{\boldsymbol{\epsilon}}
\def\bzero{\mathbf{0}}
\def\bone{\mathbf{1}}
\newcommand{\tcb}{\textcolor{blue}}
\newcommand{\tcr}{\textcolor{red}}
\newtheorem{thm}{Theorem}
\newtheorem{lem}{Lemma}
\newtheorem{ass}{Assumption}
\newtheorem{assb}{Assumption}
\newtheorem{rem}{Remark}
\newcommand{\CD}{\stackrel{d}{\longrightarrow}}
\newcommand{\CP}{\stackrel{p}{\longrightarrow}}
\newcommand{\CDb}{\stackrel{d^{\dag}}{\longrightarrow}}
\newcommand{\CPb}{\stackrel{p^{\dag}}{\longrightarrow}}
\def\section{\@startsection {section}{1}{\z@}{-3.5ex plus -1ex minus-.2ex}{2.3ex plus .2ex}{\large\bf}}
\def\subsection{\@startsection {subsection}{1}{\z@}{-3.5ex plus -1ex minus-.2ex}{2.3ex plus .2ex}{\normalsize\bf}}
\title{\textbf{An alternative bootstrap procedure for factor-augmented regression models}}
\author[$\,\!$]{\textsc{Peiyun Jiang}$^*$}
\author[$\,\!$]{\textsc{Takashi Yamagata}$^\dagger$}
	\affil[$*$]{\textit{Faculty of Economics and Business Administration,
	Tokyo Metropolitan University
	}}
	\affil[$\dagger$]{\textit{Department of Economics and Related Studies, University of York}}
	\affil[$\dagger$]{\textit{Graduate School of Economics and Management, Tohoku University}}
	\date{October 1, 2025}
\begin{document}



%
%
%

\maketitle

\begin{abstract}
	In this paper, we propose a novel bootstrap algorithm that is more efficient than existing methods for approximating the distribution of the factor-augmented regression estimator for a rotated parameter vector. The regression is augmented by $r$ factors extracted from a large panel of $N$ variables observed over $T$ time periods. We consider general weak factor (WF) models with $r$ signal eigenvalues that may diverge at different rates, $N^{\alpha _{k}}$, where $0<\alpha _{k}\leq 1$ for $k=1,2,...,r$. 
	We establish the asymptotic validity of our bootstrap method using not only the conventional data-dependent rotation matrix $\hat{\bH}$, but also an alternative data-dependent rotation matrix, $\hat{\bH}_q$, which typically exhibits smaller asymptotic bias and achieves a faster convergence rate. 
	Furthermore, we demonstrate the asymptotic validity of the bootstrap under a purely signal-dependent rotation matrix ${\bH}$, which is unique and can be regarded as the population analogue of both $\hat{\bH}$ and $\hat{\bH}_q$.
	Experimental results provide compelling evidence that the proposed bootstrap procedure achieves superior performance relative to the existing procedure.

\end{abstract}

\textbf{Keywords.} Factor model, Asymptotic bias, Bootstrap, Weak factors
{\let\thefootnote\relax\footnote{$^*$Corresponding author. Email: jiang-peiyun@tmu.ac.jp; Address: Faculty of Economics and Business Administration, Tokyo Metropolitan University, 1-1 Minami-Osawa, Hachioji-shi, Tokyo, Japan 192-0397}}


\section{Introduction}

Factor-augmented regressions are widely used in financial and economic research. They are often used to forecast macroeconomic and financial time series. The forecast regression is augmented with a few common factors extracted from a large set of predictors. Specifically, the $h$-ahead forecast regression of $y$ is written as
\begin{align}\label{Augmodel}
	y_{t+h}={\bgamma^*}'\bff_t^*+\bbeta' \bw_t+\epsilon_{t+h}, \quad t=1,\dots,T,
\end{align}
where $\bff_t^*$ is an $r \times 1$ vector of latent predictive factors and $\bw_t$ is a $p \times 1$ vector of observable predictors. Since $\bff_t^*$ is unobserved, it is typically replaced by the principal component (PC) estimator, $\hat{\bff}_t$, which satisfies $T^{-1}\sum_{t=1}^T\hat{\bff}_t\hat{\bff}_t^{\prime}=\bI_r$, and is constructed from the $\sqrt{T}$ times $r$ eigenvectors corresponding to the $r$ largest eigenvalues ($\hat{\lambda}_{1}>\dots>\hat{\lambda}_{r}$) of the $T \times T$ sample covariance matrix of $N$ predictors, $\{x_{t,i}\}_{i=1}^N$, which are assumed to follow a latent factor structure: 
\begin{align}\label{factormodel}
	x_{t,i}=\bb_{i}^{*\prime} \bff_t^* + e_{t,i},  
	\quad t=1,\dots,T; i=1,\dots,N.
\end{align}
Note that most of the existing literature assumes that the $r$ largest eigenvalues of the sample covariance matrix of $x_{t,i}$, $(\hat{\lambda}_{1},\dots,\hat{\lambda}_{r})$, diverge proportionally with $N$. This is known as the strong factor (SF) model. In contrast, we present results for more general, so-called weak factor (WF) models, in which each $\hat{\lambda}_{k}$ can diverge at a different rate $N^{\alpha_k}$, with $\alpha_{1}\geq\dots \geq{\alpha}_{r}$, $\alpha_k \in (0,1]$, $k=1,2,...,r$.
A growing body of literature suggests that such weak factors are prevalent in real-world data. See, for example, \citet{BaileyEtAl2016,BaileyEtAl2021}, \cite{DeMol2008}, 
\cite{Freyaldenhoven21JoE}, \cite{Onatski2010}, \cite{UY2019,UY2019inference}, \cite{WeiZhang2023}, among many others.

Let $(\hat{\bgamma}',\hat{\bbeta}')'$ be the least squares estimators of the regression of $y_{t+h}$ on $(\hat{\bff}_t^{\prime}, \bw_t^{\prime})^{\prime}$.
For SF models, \cite{StockWatson2002JASA}, \cite{BaiNg2006} and \cite{GoncalvesPerron2014,gonccalves2020bootstrapping} employ an asymptotic approximation in which the PC factor approximates a rotated version of the latent factor, using a data-dependent (but infeasible) rotation matrix:
\begin{align}\label{"consis"}
	\hat{\bff}_t=\hat{\bH}'\bff_t^* +o_p(1),
\end{align}
where $\hat{\bH}=  \sum_{i=1}^N{\bb}_{i}^{*}{\bb}_i^{*\prime}T^{-1}\sum_{t=1}^T{\bff}_t^{*}\hat{\bff}_t^{\prime} \hat{\bLambda}^{-1}$ with $\hat{\bLambda}=\diag{(\hat{\lambda}_{1}\cdots\hat{\lambda}_{r})}$. 
Note that $\hat{\bH}$ is data dependent but not estimable as it depends on the unobserved $(\bff_t^* ,\bb_i^* )$. 
Using the rotation matrix $\hat{\bH}$, the first term on the right-hand side of the forecast regression \eqref{Augmodel} can be written as 
$\bgamma^{*\prime}\bff_t^* =  \bgamma^{*\prime}\hat{\bH}^{-1\prime}\hat{\bH}^{\prime}{\bff}_t^* 
= \bgamma_{\hat{\bH}}^{\prime}\hat{\bff}_t + o_p(1)$, 
where $\bgamma_{\hat{\bH}}=\hat{\bH}^{-1}\bgamma^{*}$ is effectively what $\hat{\bgamma}$ estimates. \cite{BaiNg2006} show that as long as $\sqrt{T}/N \to 0$, 
the limiting distribution of $\sqrt{T}(\hat{\bgamma}-\bgamma_{\hat{\bH}})$ is centered at zero (i.e., there is no asymptotic bias). Under a relaxed condition of $\sqrt{T}/N \to c\in(0,\infty)$, \cite{Ludvigson2011} show that $\sqrt{T}(\hat{\bgamma}-\bgamma_{\hat{\bH}})$ exhibits an asymptotic bias and provide an analytical bias correction for SF models.  \cite{GoncalvesPerron2014} refine the asymptotic bias expression and propose an analytical bias correction. \cite{gonccalves2020bootstrapping} extend the results of \cite{GoncalvesPerron2014} to allow for bias corrections when errors $e_{t,i}$ are cross-correlated, using the method for estimating large covariance matrices proposed by \cite{BickelLevina2008}. 

\cite{GoncalvesPerron2014,gonccalves2020bootstrapping} propose a bootstrap procedure to correct the asymptotic bias. Noting that, in the bootstrap world, we can ``observe'' the population -- including $\hat{\bH}$ -- and recalling that $\hat{\bgamma}$ can be viewed as an estimator of $\hat{\bH}^{-1}\bgamma^{*}$, it becomes possible to construct an estimator of $\bgamma^*$, namely $\hat{\bH}\hat{\bgamma}$, in the bootstrap world . \cite{GoncalvesPerron2014,gonccalves2020bootstrapping} essentially propose to obtain the empirical distribution of $\sqrt{T}(\hat{\bH}\hat{\bgamma}-\bgamma^*)=\sqrt{T}\hat{\bH}(\hat{\bgamma}-\bgamma_{\hat{\bH}})$ via bootstrap, to approximate the limiting distribution of $\sqrt{T}(\hat{\bgamma}-\bgamma_{\hat{\bH}})$ given that $\hat{\bH} \CP \bI_r$ in the bootstrap world.

In this paper, we propose a simple and alternative bootstrap procedure, in which the bootstrap distribution of $\sqrt{T}(\hat{\bgamma}-\bgamma_{\hat{\bH}})$ is directly constructed as is. We establish the asymptotic validity of this bootstrap procedure, and finite-sample experiments suggest that our method generally provides a more accurate distributional approximation.


Equipped with this new bootstrap procedure, we further consider bootstrapping the distribution of $\hat{\bgamma}$ relative to two alternative rotation matrices. 
As introduced by \cite{BaiNg2023} and \cite{jiang2023revisiting}, there exist variants of asymptotically equivalent, data-dependent rotation matrices other than $\hat{\bH}$. 
Among these, we consider $\hat{\bH}_q=(T^{-1}\sum_{t=1}^T \hat{\bff}_t{\bff}_t^{*\prime})^{-1}$, and propose bootstrapping $\sqrt{T}(\hat{\bgamma}-\bgamma_{\hat{\bH}_q})$, where $\bgamma_{\hat{\bH}_q}=\hat{\bH}_{q}^{-1}\bgamma^{*}$. 
In addition, \cite{jiang2023revisiting} show 
that a unique (up to sign) rotation matrix $\bH$ \textit{always} exists, which is a function of the signals $(\bff_t^*,\bb_i^*)$ for $t=1,...,T$ and $i=1,...,N$ only, such that 
\begin{align}\label{f0}
	\bff_t^0:=\bH'\bff_t^*
\end{align}
where $T^{-1}\sum_{t=1}^T{\bff}_t^0 {\bff}_t^{0\prime}=\bI_r$.
Indeed, $\bH$ can be seen as the population version of $\hat{\bH}$ and $\hat{\bH}_q$. 
By substituting \eqref{f0} into \eqref{Augmodel}, the forecasting model can be equivalently expressed as 
\begin{align}\label{consis}
	y_{t+h}={\bgamma^0}'\bff_t^0+\bbeta' \bw_t+\epsilon_{t+h},
\end{align}
where $\bgamma^0= {\bH}^{-1}\bgamma^{*}$. Since $\hat{\bff}_t$ is \textit{consistent} to $\bff_t^0$ (up to sign) as shown by \cite{jiang2023revisiting}, regressing $y_{t+h}$ on $(\hat{\bff}_t,\bw_t)$ consistently estimates the \textit{parameter vector} $(\bgamma^{0\prime},\bbeta')'$. In this paper, we propose bootstrapping $\sqrt{T}(\hat{\bgamma}-\bgamma^0)$, which we recommend especially for inference on linear restrictions on $\boldsymbol{\gamma}^0$. 

Some readers may wonder whether, to obtain a data-independent rotation matrix, it would be sufficient to consider the probability limit ${\bH}_0$ of $\hat{\bH}$ as $(N,T)\rightarrow\infty$. However, even if ${\bH}_0$ is well-defined, an approach based on it requires an additional layer of approximation. To approximate $(\hat{\bff}_t,\hat{\bH}^{-1}\bgamma^{})$ by $({\bH}_0'{\bff}_t^{},{\bH}_0^{-1}\bgamma^{*})$, one must first invoke \eqref{"consis"}, and then proceed with the approximation using the probability limit ${\bH}_0$ as $(N,T)\rightarrow\infty$.
In contrast, our approach directly approximates $(\hat{\bff}_t,\hat{\bH}^{-1}\bgamma^{})$ by $({\bH}'{\bff}_t^{},{\bH}^{-1}\bgamma^{*})$, where $\bH$ is given at finite $\{N,T\}$.



The finite sample performance of the proposed bootstrap bias correction is compared with the methods of \cite{GoncalvesPerron2014,gonccalves2020bootstrapping} under both strong and weak factor models. 
The results confirm that our bootstrap procedure generally provides a more accurate approximation, leading to further bias reduction.

The rest of the paper is organized as follows. Section \ref{sec:2} introduces models and estimators relative to the latent parameter vector rotated by $\hat{\bH}$. 
Section \ref{sec:3} proposes a new bootstrap procedure and introduces two alternative rotation matrices.
Section \ref{sec:4} states assumptions and presents theoretical results. 
Section \ref{sec: MC} discusses finite-sample experiments, and Section \ref{sec:con} concludes.
Mathematical proofs 
are provided in the Online Appendix.

\noindent \textbf{Notations}: Denote by $\lambda_k[\bA]$ the $k$th largest eigenvalue of a square matrix $\bA$. For any matrix $\bM=(m_{t,i})\in\mathbb{R}^{T\times N}$, we define the Frobenius norm and $\ell_2$-induced (spectral) norm 
as 
$\|\bM\|_{\F}=(\sum_{t,i}m_{t,i}^2)^{1/2}$ and $\|\bM\|_2=\lambda_{1}^{1/2}(\bM'\bM)$, 
respectively. 
We denote the identity matrix of order $s$ by $\bI_s$ and $s\times 1$ vectors of ones and zeros by $\bone_s$ and $\bzero_s$, respectively. $\lesssim$ ($\gtrsim$) represents $\leq$ ($\geq$) up to a positive constant factor. $\odot$ denotes the Hadamard product of matrices. For any positive sequences $a_n$ and $b_n$,  we write $a_n \asymp b_n$ if $a_n \lesssim b_n$ and $a_n \gtrsim b_n$. 
All asymptotic results are for cases where $N,T\to\infty$, and we omit explicit mention of this unless necessary. $M$ denotes a positive constant which does not depend on $N$ and $T$.

\section{Factor-Augmented Regression}\label{sec:2}

The factor-augmented regression model \eqref{Augmodel} can be rewritten in matrix form as:
\begin{align}\label{augmodel_mat}
	\by={\bF}^*\bgamma^* + \bW \bbeta + \bepsilon 
	=\bZ^* \bdelta^* + \bepsilon,
\end{align}
where $\by=(y_{1+h},\dots,y_{T+h})'$, $\bepsilon=(\epsilon_{1+h},\dots,\epsilon_{T+h})'$, $\bF^{*} = (\bff_1^{*},\dots,\bff_T^{*})'$, $\bW=(\bw_1,\cdots, \bw_T)'$, $\bZ^* = (\bF^*,\bW)$ and $\bdelta^* = (\bgamma^{*\prime},\bbeta')'$. 
In line with \eqref{factormodel}, the latent factor model for the $T \times N$ matrix of predictors is given by
\begin{align}\label{factormodel_mat}
	\bX={\bF}^*\bB^{* \prime} + \bE,
\end{align}
where $\bX = (x_{t,i})$, $\bB^* = (\bb_1^* ,\dots,\bb_N^*)'$ and $\bE = (e_{t,i})$.
Let $(\lambda_1>\cdots>\lambda_r)$ denote the $r$ largest eigenvalues of the signal component of the model \eqref{factormodel_mat}, namely $T^{-1}\bF^*\bB^{\ast\prime}\bB^{\ast}\bF^{\ast\prime}$, and define $\bLambda=\diag(\lambda_1,\dots,\lambda_r)$. We allow the $r$ signal eigenvalues to diverge at different rates, specifically $\lambda_k \asymp N^{\alpha_k}$ with $0<\alpha_k\leq 1$ for $k=1,\dots,r$. We refer to model \eqref{factormodel_mat} with $\alpha_r=1$ as a strong factor (SF) model, and the more general model without this restriction as a weak factor (WF) model.

\citet{jiang2023revisiting} show that there always exists a unique (up to sign) rotation matrix
\begin{align}\label{H}
	\bH := \bP \bV^{-1/2},
\end{align}
where $\bP$ is the eigenvector matrix of ${\bB^*}'\bB^*(T^{-1}{\bF^*}'\bF^*)$ corresponding to $(\lambda_1,\dots,\lambda_r)$ and $\bV=\bP(T^{-1}{\bF^*}'\bF^*)\bP'$, such that
\begin{align}
	\bF^0 := \bF^* \bH, \qquad \bB^0 := \bB^* \bH^{-1\prime},
\end{align}
which by construction satisfy the $r^2$ restrictions $T^{-1}\bF^{0\prime}\bF^0 = \bI_r$ and $\bB^{0\prime}\bB^0 = \bLambda$.

Therefore, $\bH$ is a pure function of signals $(\bF^* , \bB^*)$. It can also be straightforwardly shown that
\begin{align}
	\label{HalaHhat}
	{\bH} = {\bB^*}'\bB^*(T^{-1}{\bF^*}'{\bF^0}) {\bLambda}^{-1}
	=(T^{-1}{\bF^0}'{\bF^*})^{-1}.
\end{align}

With this rotation, we can equivalently express models \eqref{augmodel_mat} and \eqref{factormodel_mat} in terms of $\bF^0$, $\bgamma^0 = \bH^{-1}\bgamma^*$, and $\bB^0$, which define the pseudo-true models:
\begin{align}
	\label{pseudomodel}
	\by&={\bF}^0\bgamma^0 + \bW \bbeta + \bepsilon 
	=\bZ^0 \bdelta^0 + \bepsilon, \\
	\bX&=\bF^0 \bB^{0\prime }+\bE,    \label{pseudofactormodel}
\end{align}
where $\bZ^0 = (\bF^0,\bW)$ and 
	$\bdelta^0=\bPhi_{\bH}^{-1}\bdelta^* \text{ with } \bPhi_{{\bH}}=
	\big(\begin{smallmatrix}
		{\bH} & \mathbf{0} \\
		\mathbf{0} & \bI_p
	\end{smallmatrix}\big)$. 
\cite{StockWatson2002JASA} propose extracting principal component (PC) factors from the predictor matrix $\bX$ and using them in the forecast regression. The PC estimator, $(\hat{\mathbf{F}}, \hat{\mathbf{B}})$, is defined as the solution to the minimization problem $\left\|\mathbf{X}-\mathbf{F B}^{\prime}\right\|_{\mathrm{F}}^2$ subject to the $r^2$ constraints: $T^{-1} \mathbf{F}^{\prime} \mathbf{F}=\mathbf{I}_r$ and $\mathbf{B}^{\prime} \mathbf{B}$ being a diagonal matrix with rank $r$. 
The constrained minimization reduces to the eigenvalue problem of $T^{-1} \mathbf{X X}^{\prime}$. The factor estimator $\hat{\mathbf{F}} \in \mathbb{R}^{T \times r}$ is obtained as $\sqrt{T}$ times the $r$ eigenvectors associated with the $r$ largest eigenvalues of $T^{-1} \mathbf{X} \mathbf{X}^{\prime}$ $(\hat{\lambda}_1>\cdots>\hat{\lambda}_r)$, and the loading estimator $\hat{\mathbf{B}} \in \mathbb{R}^{N \times r}$ is computed as $\hat{\mathbf{B}}=T^{-1} \mathbf{X}^{\prime} \hat{\mathbf{F}}$. 
By construction, $T^{-1} \hat{\mathbf{F}}^{\prime} \hat{\mathbf{F}}=\mathbf{I}_r$ and $\hat{\mathbf{B}}^{\prime} \hat{\mathbf{B}}=\hat{\boldsymbol{\Lambda}}=\operatorname{diag}(\hat{\lambda}_1, \ldots, \hat{\lambda}_r)$.

Then, regressing $\by$ on $\hat{\bZ}=(\hat{\bF},\bW)$ yields the least squares estimator
\begin{align}\label{ols}
	\hat{\bdelta}=(\hat\bZ'\hat\bZ)^{-1}\hat\bZ'\by. 
\end{align}
Hence, the PC estimator $\hat{\bF}$ can naturally be viewed as an estimator of $\bF^0$, and $\hat{\bdelta}$ as an estimator of the parameter vector $\bdelta^0$ in the pseudo-true models \eqref{pseudomodel} and \eqref{pseudofactormodel}.

\cite{BaiNg2002,BaiNg2006}, \cite{StockWatson2002JASA},  consider the approximation 
\begin{align}\label{FHhat}
	\hat{\bF}=\bF^* \hat{\bH} + o_p(1), 
\end{align}
where 
$\hat{\bH} = {\bB^*}'\bB^*(T^{-1}{\bF^*}'\hat{\bF}) \hat{\bLambda}^{-1}$. 
Comparing this to \eqref{HalaHhat}, we see that $\bH$ is the population analogue of $\hat{\bH}$. With respect to $\bF^0$ in the pseudo true models \eqref{pseudomodel} and \eqref{pseudofactormodel}, we can establish the following key identity:
\begin{align}\label{keyidentity}
\bF^* \hat{\bH} = \bF^0 \tilde{\bH}
\end{align}
where
$\tilde{\bH} :=\bH^{-1}\hat{\bH} = {\bB^0}'\bB^0(T^{-1}{\bF^0}'\hat{\bF}) \hat{\bLambda}^{-1}$. 
In the same way that $\hat{\bH}$ is considered an estimator of $\bH$, $\tilde{\bH}$ can be viewed as an estimator of the identity matrix $\bI_r$.
Using \eqref{keyidentity}, the first term on the right-hand side of the augmented model \eqref{augmodel_mat} can be written as
\begin{align}\label{augmodel_Hhat}
{\bF^* \bgamma^*} 
={\bF^* \hat{\bH}} \bgamma_{\hat{\bH}} 
={\bF^0 \tilde{\bH}} \bgamma_{\hat{\bH}} 
\end{align}
where $\bgamma_{\hat{\bH}}=\hat{\bH}^{-1} \bgamma^*=\tilde{\bH}^{-1} \bgamma^0$.

Based on the approximation in \eqref{FHhat} and the identities \eqref{keyidentity} and \eqref{augmodel_Hhat}, $\hat{\bdelta}$ can be regarded as an estimator of $\bdelta_{\hat{\bH}}:=(\bgamma_{\hat{\bH}}',\bbeta')'$. 
\citet[Theorem 1]{jiang2024Mw} derive the asymptotic distribution of
$
\sqrt{T}(\hat\bdelta - \bdelta_{\hat{\bH}})
$
together with its asymptotic bias, where
\begin{align}\label{keyidentity2}
\bdelta_{\hat{\bH}}=\bPhi_{\hat{\bH}}^{-1}\bdelta^*
=\bPhi_{\tilde{\bH}}^{-1}\bdelta^0
\end{align}
with 
$\bPhi_{\hat{\bH}}=
\big(
\begin{smallmatrix}
\hat{\bH} & \mathbf{0} \\
\mathbf{0} & \bI_p
\end{smallmatrix} 
\big)$
and 
$\bPhi_{\tilde{\bH}}=
\big(
\begin{smallmatrix}
\tilde{\bH} & \mathbf{0} \\
\mathbf{0} & \bI_p
\end{smallmatrix} 
\big)$.
\section{New Bootstrap Procedure}\label{sec:3}

Now consider bootstrapping $\sqrt{T}(\hat\bdelta - \bdelta_{\hat{\bH}})$. Following \cite{jiang2023revisiting}, it is natural to regard the PC estimators as estimators of the signal parameters in the pseudo-true models \eqref{pseudomodel} and \eqref{pseudofactormodel}. 
We adopt these pseudo-true models in the bootstrap resampling because the PC parameters $(\hat{\bF},\hat{\bB})$, which serve as the “true” parameters in the bootstrap world, satisfy the same $r^2$ restrictions as $(\bF^0,\bB^0)$. 
\textit{The novelty of our bootstrap procedure is the use of the key identity \eqref{keyidentity2} to generate the rotation-dependent parameter vector $\bdelta_{\hat{\bH}}$ for the pseudo-true models.}

We describe the bootstrap procedure for approximating the distribution of $\sqrt{T}(\hat\bdelta - \bdelta_{\hat{\bH}})$ as follows. Variables generated under the bootstrap law are denoted by the superscript ``$\dag$''. 
The superscript $(b)$ refers to the $b$-th bootstrap sample, for $b=1,\dots,B$.

\begin{enumerate}
\item Generate the bootstrap data $\bX^{\dag(b)}$ using a resampled error matrix $\bE^{\dag(b)}$:
\begin{align}
	\bX^{\dag(b)}=\bF^{0\dag}\bB^{0\dag\prime} + \bE^{\dag(b)}
\end{align}
where $\bF^{0\dag}:=\hat{\bF}$ and $\bB^{0\dag}:=\hat{\bB}$. 
Using $\bX^{\dag(b)}$, obtain the bootstrap PC estimators $\hat{\bF}
^{\dag(b)}$ and $\hat{\bB}^{\dag(b)}$, correcting their signs if necessary so that all sample correlations $cor(\hat{f}_{k,t}^{\dag(b)},f_{k,t}^{0\dag})$, $k=1,2,\dots,r$, are positive. 
Construct the bootstrap rotation matrix 
\begin{align}\label{btsH}
	\tilde{\bH}^{\dag(b)}
	=\bB^{0\dag\prime}\bB^{0\dag}(T^{-1}\bF^{0\dag}{\hat{\bF}}^{\dag(b)})(\hat{\bB}^{\dag(b)\prime}\hat{\bB}^{\dag(b)})^{-1}.   
\end{align}

\item Generate the bootstrap data $\by^{\dag(b)}$ using a resampled error vector $\bepsilon^{\dag(b)}$:
\begin{align}
	\by^{\dag(b)}=\bF^{0\dag}{\bgamma^{0\dag}} + \bW\bbeta^{\dag} +\bepsilon^{\dag(b)}
	=\bZ^{0\dag}\bdelta^{0\dag} + \bepsilon^{\dag(b)}
\end{align}
where ${\bgamma^{0\dag}}:=\hat{\bgamma}$, $\bbeta^{\dag}:=\hat{\bbeta}$, $\bZ^{0\dag} = (\bF^{0\dag},\bW)$ and $\bdelta^{0\dag}=(\bgamma^{0\dag\prime},\bbeta^{\dag\prime})'$.
Using $\hat{\bZ}^{\dag(b)} = (\hat{\bF}^{\dag(b)},\bW)$, compute the bootstrap estimator $\hat{\bdelta}^{\dag(b)}=(\hat{\bZ}^{\dag(b)\prime}\hat{\bZ}^{\dag(b)})^{-1}\hat{\bZ}^{\dag(b)\prime}\by^{\dag(b)}$ and the corresponding bootstrap ``parameter vector'' $\bdelta_{\hat{\bH}^{\dag(b)}}=(\bgamma^{0\dag\prime}\tilde{\bH}^{\dag(b)\prime -1},\bbeta^{\dag\prime})'=\bPhi_{\tilde{\bH}^{\dag(b)}}^{-1} \bdelta^{0\dag}$ with 
$\bPhi_{\tilde{\bH}^{\dag(b)}}=
\big(
\begin{smallmatrix}
	\tilde{\bH}^{\dag(b)} & \mathbf{0} \\
	\mathbf{0} & \bI_p
\end{smallmatrix} 
\big)$
to obtain
\begin{align}\label{btsdist}
	\sqrt{T}(\hat{\bdelta}^{\dag(b)} - \bdelta_{\hat{\bH}^{\dag(b)}}).
\end{align}

\item Repeat Steps 1-2 for $b=1,2,\dots,B$ to construct the bootstrap distribution of \eqref{btsdist}.

\end{enumerate}

Note that the asymptotic justification for using the bootstrap statistic \eqref{btsdist} to mimic $\sqrt{T}(\hat{\bdelta} - \bdelta_{\hat{\bH}})$ relies on two facts, which are overlooked in the literature: (i) the pseudo-true model is the unique (up to sign) transformation of the latent model that the PC estimators recover, and (ii) the identity ${\tilde{\bH}}^{-1}\bgamma^0={\hat{\bH}}^{-1}\gamma^*$ holds.

\begin{rem}\label{rem:1}
Given the bootstrap errors $(\bE^{\dag},\bepsilon^{\dag})$, our bootstrap procedure differs from that of 
\cite{GoncalvesPerron2014,gonccalves2020bootstrapping} in the way the objective statistics of interest are computed. Instead of using \eqref{btsdist}, 
\citet[Corollary 3.1]{GoncalvesPerron2014} suggest computing $\sqrt{T}(\bPhi_{\tilde{\bH}^{\dag}}\hat{\bdelta}^{\dag} - \bdelta^{0\dag})=\sqrt{T}\bPhi_{\tilde{\bH}^{\dag}}(\hat{\bdelta}^{\dag} - \bdelta_{\hat{\bH}^{\dag}})$. Although exempt from the sign indeterminacy, even if $\tilde{\bPhi}^{\dag}\CPb\bI_{r+p}$, their bootstrap introduces additional randomness into $\sqrt{T}(\hat{\bdelta}^{\dag} - \bdelta_{\hat{\bH}^{\dag}})$ through pre-multiplication by $\bPhi_{\tilde{\bH}^{\dag}}$. Therefore, our bootstrap procedure is expected to mimic the distribution of the objective statistic more efficiently in finite samples.
\end{rem} 

We can consider various bootstrap resampling methods for the elements of $\bE^{\dag}$ and $\bepsilon^{\dag}$. To account for heteroskedastic errors, we employ the wild bootstrap, defined as $\bE^{\dag}=(s_{t,i}^{\dag}\hat{e}_{t,i})$ and $\bepsilon^{\dag}=(\omega_{t}^{\dag}\hat{\epsilon}_{t})$, where $s_{t,i}^{\dag}$ and $\omega_{t}^{\dag}$ are i.i.d. random variables satisfying $\E^{\dag}[{s_{t,i}^{\dag}}]=0$, $\E^{\dag}[{s_{t,i}^{\dag2}}]=1$, $\E^{\dag}[{\omega_{t}^{\dag}}]=0$ and $\E^{\dag}[{\omega_{t}^{\dag 2}}]=1$. For bootstrap procedures designed to handle cross-correlated errors, or errors that are both cross- and serially correlated, see \cite{gonccalves2020bootstrapping} and \cite{LiShenZhou2024}.

The proposed procedure can be applied in various contexts, including asymptotic bias approximation, confidence interval construction, and hypothesis testing, under different choices of rotation matrices, as described next.

\subsection{Bootstrapping for Different Rotation Matrices}

As shown by \cite{BaiNg2023} and \cite{jiang2023revisiting}, there exist several rotation matrices other than $\hat{\bH}$. In particular, \cite{jiang2024Mw} consider the approximation and the identity
\begin{align}\label{FHqhat}
\hat{\bF}=\bF^* \hat{\bH}_q + o_p(1) \text{ and } \bF^* \hat{\bH}_q=\bF^0 \tilde{\bH}_q
\end{align}
respectively, where $\hat{\bH}_q=(T^{-1}\hat{\bF}'\bF^*)^{-1}$ and $\tilde{\bH}_q=(T^{-1}\hat{\bF}'\bF^0)^{-1}$. Comparing this to \eqref{HalaHhat}, we see that $\bH$ is also the population analogue of $\hat{\bH}_q$. \cite{jiang2024Mw} further show that $\sqrt{T}(\hat\bdelta - \bdelta_{\hat{\bH}_q})$ is asymptotically normal, with an asymptotic bias generally different from that of $\sqrt{T}(\hat\bdelta - \bdelta_{\hat{\bH}})$. 
The approximate distribution of $\sqrt{T}(\hat\bdelta - \bdelta_{\hat{\bH}_q})$ can be obtained using the same bootstrap procedure as before, but replacing $\tilde{\bH}^{\dag(b)}$ in \eqref{btsH} with 
$\tilde{\bH}_q^{\dag(b)}=(T^{-1}\hat{\bF}^{\dag(b)}\bF^{0\dag})^{-1}$,
and replacing $\bdelta_{\hat{\bH}^{\dag(b)}}$ in \eqref{btsdist} with
$\bdelta_{\hat{\bH}_q^{\dag(b)}}=(\bgamma^{0\dag\prime}\tilde{\bH}_q^{\dag(b)\prime -1},\bbeta^{\dag\prime})'$.

As argued in \cite{jiang2024Mw}, it is natural to regard $(\bdelta^0,{\bF}^0)$ as the parameters estimated by $(\hat{\bdelta},\hat{\bF})$. In this context, the distribution of interest is $\sqrt{T}(\hat\bdelta - \bdelta^0)$. 
To bootstrap this distribution, the same procedure described above can be used, with $\tilde{\bH}^{\dag(b)}$ in \eqref{btsH} replaced by $\bI_r$ and $\bdelta_{\hat{\bH}^{\dag(b)}}$ in \eqref{btsdist} replaced by $\bdelta^{0\dag} (:= \hat{\bdelta})$.


\section{Theory}\label{sec:4}
In this section, we establish the asymptotic validity of the proposed bootstrap procedure in approximating the distribution of the estimator $\hat{\bdelta}$ relative to the rotated parameter vectors under different rotation matrices.
\subsection{Assumptions}
We begin with the assumptions underlying the non-bootstrap results, followed by the additional assumptions required for the bootstrap analysis. Assumptions \ref{ass:eigen}--\ref{ass:Aug_errors} pertain to the non-bootstrap results and are identical to those in \citet{jiang2024Mw}.
\begin{ass}\normalfont\label{ass:eigen}
The smallest eigenvalues of ${\bB^*}'\bB^*$ and $T^{-1}{\bF^*}'\bF^*$ are bounded away from zero.
\end{ass}

\begin{ass}[Signal strength]\normalfont\label{ass:signal}
There exist random or non-random variables $d_1,\dots,d_r>0$ and constants $0<\alpha_r\leq \dots \leq \alpha_1 \leq 1$ such that {$\lambda_k=d_kN^{\alpha_k}$} for $k=1,\dots, r$ with ordered $0<\lambda_r< \dots < \lambda_1$ for large $N$. If $d_k$'s are random, we have {$\E[d_k^2] \le M$} for all $k$.
\end{ass}

Denote $\bN=\diag(N^{\alpha_1}, \dots,N^{\alpha_r})$ and $\bD=\diag(d_1,\dots,d_r)$, so that we can write {$\bLambda=\bD \bN$}. Note that we do not require any specific structure in $(\bF^{*}, \bB^{*})$, such as diagonality of $\bN^{-\frac{1}{2}}\bB^{*\prime}\bB^*\bN^{-\frac{1}{2}}$ in \citet[Section 5]{BaiNg2023} and/or $T^{-1}\bF^{*\prime}\bF^*=\bI_r$ in \cite{Freyaldenhoven21JoE}. 

\begin{ass}[Idiosyncratic errors]\normalfont\label{ass:errors} ${ }^{ }$\\
(i) $\E[e_{t,i}]=0$ and $\E[e_{t,i}^4]\le M$ for all $i$ and $t$;\\
(ii) For all $i$, $\left|\E[e_{s,i} e_{t,i}]\right| \leq\left|\gamma_{s, t}\right|$ for some $\gamma_{s, t}$ such that $\sum_{t=1}^T\left|\gamma_{ s, t}\right| \leq M$;\\
(iii)  For all $t$, $\left|\E[e_{t,i} e_{t,j}]\right| \leq\left|\tau_{i, j}\right|$ for some $\tau_{i, j}$ such that $\sum_{j=1}^N\left|\tau_{ i,j}\right| \leq M$;\\
(iv) $\left\|\bE \right\|_2^2=O_p(\max{\{N,T}\})$.
\end{ass}

As discussed earlier, the PC estimators $(\hat{\bF},\hat{\bB})$ are viewed as estimators of the pseudo-true parameters $({\bF^0},{\bB^0})$. Accordingly, we impose the following assumptions directly on them.


\begin{ass}[Factors and Loadings]\normalfont\label{ass:factor and loadings}

Denote $\bz^0_t=(\bff^{0\prime}_t, \bw_t')'$. \\
(i) $\E\|{\bz_t^0}\|_2^4 \le M$ and $\E\|\bb_i^0\|_2^4 \le M$;\\ 
(ii) $\E\| \bN^{-\frac{1}{2}}\sum_{i=1}^{N} \bb^0_ie_{t,i}   \|_2^2 \le M$ for each $t$; \\
(iii) $\E\| T^{-\frac{1}{2}}\sum_{t=1}^{T}{\bz_t^0}e_{t,i}\|_2^2 \le M$ for each $i$; \\
(iv) The $r \times r$ matrix satisfies $\E\| T^{-\frac{1}{2}}\bN^{-\frac{1}{2}} \sum_{t=1}^{T}\sum_{i=1}^{N}\bb_i^{0}e_{t,i}{\bz_t^{0\prime}} \|_{{2}}^2 \le M$;\\
(v) 
$T^{-1} \sum_{t=1}^{T}  ( \bN^{-\frac{1}{2}}\sum_{i=1}^{N} \bb_i^0e_{t,i})  ( \bN^{-\frac{1}{2}}\sum_{i=1}^{N} \bb_i^0e_{t,i})'\CP \bGamma  $, where $\bGamma = \lim_{N, T \rightarrow \infty} T^{-1} \sum_{t=1}^T \bGamma_t>0$, and $\bGamma_t =   \operatorname{Var}(\bN^{-\frac{1}{2}}\sum_{i=1}^{N} \bb_i^0e_{t,i})$.
\end{ass}
The moment restrictions in Assumption 4 (iii), (iv) 
are essentially similar to Assumptions D, F2 
in \cite{Bai2003}, and Assumption 4 (ii) 
is similar moment restriction related for $\bb_i^0$.
Assumption (v) is similar to Assumption 3(e) in \cite{GoncalvesPerron2014}.

Now we impose assumptions on the pseudo-true augmented model \eqref{pseudomodel}:

\begin{ass}[Weak dependence between idiosyncratic errors and regression errors]\normalfont\label{ass:2errors}
The $r \times r$ matrix satisfies $\E||T^{-\frac{1}{2}}\bN^{-\frac{1}{2}} \sum_{t=1}^{T} \sum_{i=1}^N \bb_i^0 e_{t,i} \varepsilon_{t+h}||_2^2 \leq M$. 
	\end{ass}
	
	{
\begin{ass}[Moments, parameters and CLT]
	\normalfont\label{ass:Aug_errors}${ }^{ }$\\
	(i) $\E[\epsilon_{t+h}] =0$ and $\E|\epsilon_{t+h}|^2 < M $;\\
	(ii) $||\bdelta^0||_{2}\leq M$ and $\bH\CP \bH_0$ which is fixed and invertible;\\
	(iii) $\E||\bz_t^0 ||^4 \leq M$,
	$T^{-1/2} {\bZ}^{0\prime}\bepsilon \CD N(\mathbf{0},\bSigma_{\bZ^0 \bepsilon})$, $T^{-1} {\bZ}^{0\prime}\bZ^0 \CP \bSigma_{\bZ^0 \bZ^0}$, where $\bSigma_{\bZ^0\bepsilon}$ and $\bSigma_{\bZ^0 \bZ^0}$ are fixed, positive definite and bounded.
\end{ass}
}
Assumptions \ref{ass:2errors} and \ref{ass:Aug_errors} are similar to Assumption 4 in \cite{GoncalvesPerron2014} and Assumption E in \cite{BaiNg2006}, respectively. Under Assumption \ref{ass:eigen}, $\bH$ is bounded in probability. 
Assumption \ref{ass:Aug_errors}(ii) further guarantees that its probability limit exists and coincides with that of other four data-dependent rotation matrices considered in \citet{jiang2023revisiting}.


We now state the assumptions required for the bootstrap analysis, denoted by superscripts ``$\dag$'' in the assumption numbers.

\begin{rem}
Throughout the paper, $\Pr^{\dag}$, $\E^{\dag}$ and $\operatorname{Var^{\dag}}$ denote probability, expectation and variance, conditional on the realization of the original sample, respectively. We use the symbols $o_{p^{\dag}}$ and $O_{p^{\dag}}$ for bootstrap sample asymptotics, which correspond to $o_p$ and $O_p$ for the original sample asymptotics. 
\end{rem}


\setcounter{assb}{2}
\begin{assb}[Idiosyncratic errors]
\normalfont\label{ass:errors_b}${ }^{ }$\\
{(i) $\E^{\dag}[e_{t,i}^{\dag}]=0$ and $\E^{\dag}[e_{t,i}^{\dag4}]=O_p(1)$ for all $i$ and $t$;}\\
(ii) For all $i$, $|\E^{\dag}[e_{s,i}^{\dag} e_{t,i}^{\dag}]| \leq|\gamma_{s, t}^{\dag}|$ for some $\gamma_{s, t}^{\dag}$ such that $\sum_{t=1}^T|\gamma_{ s, t}^{\dag}| =O_p(1)$;\\
(iii)  For all $t$, $|\E^{\dag}[e_{t,i}^{\dag} e_{t,j}^{\dag}]| \leq|\tau_{i, j}^{\dag}|$ for some $\tau_{i, j}^{\dag}$ such that $\sum_{j=1}^N|\tau_{ i,j}^{\dag}| =O_p(1)$;\\
(iv) $||\bE^{\dag} ||_2^2=O_{p^{\dag}}(\max{\{N,T}\})$, in probability.
\end{assb}

\begin{assb}[Factors and Loadings]\normalfont\label{ass:factor and loadings_b}
${ }^{ }$\\
(i) $\E^{\dag}\| \bN^{-\frac{1}{2}}\sum_{i=1}^{N} \hat{\bb}_ie_{t,i}^{\dag}   \|_2^2 =O_p(1)$ for each $t$; \\
(ii) $\E^{\dag}\| T^{-\frac{1}{2}}\sum_{t=1}^{T}\hat{\bz}_t e_{t,i}^{\dag}\|_2^2 =O_p(1)$ for each $i$; \\
(iii) The $r \times r$ matrix satisfies $\E^{\dag}\| T^{-\frac{1}{2}}\bN^{-\frac{1}{2}} \sum_{t=1}^{T}\sum_{i=1}^{N} \hat{\bb}_i e_{t,i}^{\dag}\hat{\bz}_t^{\prime} \|_{2}^2 =O_p(1)$; \\
(iv) 
$T^{-1} \sum_{t=1}^{T}  ( \bN^{-\frac{1}{2}}\sum_{i=1}^{N} \hat{\bb}_i e_{t,i}^{\dag})  ( \bN^{-\frac{1}{2}}\sum_{i=1}^{N} \hat{\bb}_i e_{t,i}^{\dag})'- \bGamma^{\dag} =o_{p^{\dag}}(1) $, in probability, where $\bGamma^{\dag} =   T^{-1} \sum_{t=1}^T  \operatorname{Var^{\dag}}(\bN^{-\frac{1}{2}}\sum_{i=1}^{N} \hat{\bb}_i e_{t,i}^{\dag} )$ is positive definite almost surely.
\end{assb}

\begin{assb}[Weak dependence between idiosyncratic errors and regression errors]\label{ass:2errors_b} 
The $r \times r$ matrix satisfies $\E^{\dag} ||T^{-\frac{1}{2}}\bN^{-\frac{1}{2}} \sum_{t=1}^{T} \sum_{i=1}^N  \hat{\bb}_i e_{t,i}^{\dag} \varepsilon_{t+h}^{\dag}||_2^2 =O_p(1)$.
\end{assb}

\begin{assb}[Moments and CLT for the Score Vector]\normalfont\label{ass:Aug_errors_b}
${ }^{ }$\\
(i) $\E[\epsilon_{t+h}^\dag]=0$ and $T^{-1}\sum_{t=1}^T\E|\epsilon_{t+h}^{\dag}|^2=O_p(1)$;\\
(ii) $\bSigma_{\hat{\bZ} \bepsilon^{\dag}}^{-1/2} T^{-1/2} \hat{\bZ}^{\prime}\bepsilon^{\dag} \CDb N(\mathbf{0}, \bI_{(r+p)})$, in probability, where $ \E^{\dag} ||  T^{-\frac{1}{2}}\sum_{t=1}^{T} \hat{\bz}_t \epsilon_{t+h}^{\dag} ||_2^2 =O_p(1)$, and $\bSigma_{\hat{\bZ}\bepsilon^{\dag}}=\operatorname{Var^{\dag}}(T^{-\frac{1}{2}}\sum_{t=1}^{T} \hat{\bz}_t \epsilon_{t+h}^{\dag} )$ is positive definite almost surely.
\end{assb}

\begin{assb}\normalfont\label{ass:bootstrap_b}
$\plim \bSigma_{\hat{\bZ} \bepsilon^{\dag}} = \bSigma_{\bZ^0 \bepsilon }$ and $\plim  \bGamma^{\dag} =\bGamma$.
\end{assb}

Assumptions \ref{ass:errors_b}--\ref{ass:Aug_errors_b} are the bootstrap analogues of Assumption \ref{ass:errors}--\ref{ass:Aug_errors}. 
Assumption \ref{ass:bootstrap_b} is similar to Conditions E* and F* in \citet{GoncalvesPerron2014}, which guarantees the consistency of the bootstrap, so that the relevant bootstrap and original statistics converge in probability to the same quantities. 
Since $\hat{\bz}_t$ estimates $ \bz_t^0$, $\bSigma_{\hat{\bZ} \bepsilon^{\dag}}$ is the sample analogue of $ \bSigma_{\bZ^0 \bepsilon } $ provided that $\epsilon_{t+h}^{\dag}$ is constructed to mimic the time series dependence of $\epsilon_{t+h}$. By Assumption \ref{ass:factor and loadings_b}(iv), $\bGamma^{\dag} =   T^{-1} \sum_{t=1}^T  \operatorname{Var}^{\dag}(\bN^{-\frac{1}{2}}\sum_{i=1}^{N} \hat{\bb}_i e_{t,i}^{\dag})$. 
Since $\hat{\bb}_i$ estimates $ \bb_i^0$, $\bGamma^{\dag}$ is the sample analogue of $ \bGamma $ if $e_{t,i}^{\dag}$ is constructed to mimic the cross-sectional dependence of $e_{t,i}$.

Given these assumptions, we now present our main theoretical results.
\subsection{Main Results}

\subsubsection{The Case of $\hat{\bH}$}

\begin{thm}\label{thm:bias_Hhat}
Suppose Assumptions \ref{ass:eigen}–\ref{ass:Aug_errors} and \ref{ass:errors_b}--\ref{ass:bootstrap_b} hold. If $\alpha_r>\frac{1}{2}$, $\frac{N^{1-\alpha_r}}{\sqrt{T}} \to  0$, and $\sqrt{T}N^{\frac{1}{2}\alpha_1-\frac{3}{2}\alpha_r} \to c_1 \in [0,\infty)$, as $N, T \to \infty$, we have
\[
\sqrt{T}(\hat{\bdelta}^{\dag}-\bdelta_{\hat{\bH}^{\dag}}) \CDb N\left(-c_1 \bkappa_{\bdelta^*}, \bSigma_{\bdelta}\right) ,
\]
in probability, with
\begin{align*}
	\bkappa_{\bdelta^*}=
	\bSigma_{\bZ^0 \bZ^0}^{-1} \binom{\bG + \nu\bD^{-1}\bGamma\bD^{-1}}{\bSigma_{\bW \bF^0} \, {\bG}} \, \bH_0^{-1} \, \bgamma^* 
	~~~ \text{and}~~~
	\bSigma_{\bdelta}=  \bSigma_{\bZ^0 \bZ^0}^{-1} \bSigma_{\bZ^0 \bepsilon} \bSigma_{\bZ^0\bZ^0}^{-1},
\end{align*}
where 
$c_1 \bG = \lim_{N,T\to\infty} \sqrt{T}\bN^{\frac{1}{2}} \bGamma  \bD^{-2} \bN^{-\frac{3}{2}} $, $\nu = \lim_{N\to\infty} N^{-\frac{1}{2}(\alpha_1-\alpha_r)}$ 
and\\ $\bSigma_{\bW \bF^0} = \plim_{N,T \rightarrow \infty} T^{-1}\bW'\bF^0$.
\end{thm}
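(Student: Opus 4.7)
The plan is to mirror the non-bootstrap argument of \citet{jiang2024Mw} Theorem~1 under the bootstrap law, exploiting the pseudo-true DGP constructed in Steps~1--2 of the procedure and the bootstrap version of the key identity \eqref{keyidentity2}. I would begin with the algebraic representation
$$\sqrt{T}\bigl(\hat{\bdelta}^{\dag} - \bdelta_{\hat{\bH}^{\dag}}\bigr) = \bigl(T^{-1}\hat{\bZ}^{\dag\prime}\hat{\bZ}^{\dag}\bigr)^{-1} T^{-1/2}\hat{\bZ}^{\dag\prime}\bigl(\by^{\dag} - \hat{\bZ}^{\dag}\bdelta_{\hat{\bH}^{\dag}}\bigr).$$
Since $\by^{\dag} = \bZ^{0\dag}\bdelta^{0\dag} + \bepsilon^{\dag}$ with $\bdelta_{\hat{\bH}^{\dag}} = \bPhi_{\tilde{\bH}^{\dag}}^{-1}\bdelta^{0\dag}$, and since $\bW$ is shared between $\bZ^{0\dag}$ and $\hat{\bZ}^{\dag}$, the score collapses to
$$T^{-1/2}\hat{\bZ}^{\dag\prime}\bepsilon^{\dag} + T^{-1/2}\hat{\bZ}^{\dag\prime}\bigl(\bF^{0\dag}\tilde{\bH}^{\dag} - \hat{\bF}^{\dag}\bigr)\bgamma_{\hat{\bH}^{\dag}},$$
where $\bgamma_{\hat{\bH}^{\dag}} = \tilde{\bH}^{\dag -1}\bgamma^{0\dag}$. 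The first piece drives the variance and the second drives the bias.

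For the bias piece, I would import the bootstrap PC expansion of $\hat{\bF}^{\dag} - \bF^{0\dag}\tilde{\bH}^{\dag}$ obtained by substituting $\bX^{\dag} = \bF^{0\dag}\bB^{0\dag\prime} + \bE^{\dag}$ into the defining eigenvalue equation $T^{-1}\bX^{\dag}\bX^{\dag\prime}\hat{\bF}^{\dag} = \hat{\bF}^{\dag}\hat{\bLambda}^{\dag}$. After the signal-on-signal term cancels, this yields the familiar four-term decomposition involving cross-products of $\bE^{\dag}$, $\bF^{0\dag}$, $\bB^{0\dag}$ normalised by $\hat{\bLambda}^{\dag -1}$. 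Under Assumptions \ref{ass:errors_b}--\ref{ass:2errors_b}, three of these terms are $o_{p^\dag}(T^{-1/2})$ after pre-multiplication by $T^{-1/2}\hat{\bZ}^{\dag\prime}$ and post-multiplication by $\bgamma_{\hat{\bH}^{\dag}}$; the surviving $\bE^{\dag}\bE^{\dag\prime}$ term, via Assumption \ref{ass:factor and loadings_b}(iv), produces $\bGamma^{\dag}$ after the weak-factor normalisation $\bN^{-1/2}$. The rate conditions $\sqrt{T}N^{\alpha_1/2 - 3\alpha_r/2}\to c_1$ and $N^{-(\alpha_1-\alpha_r)/2}\to \nu$ are precisely what convert this bootstrap bias into the stated limit $c_1(\bG + \nu\bD^{-1}\bGamma\bD^{-1})\bH_0^{-1}\bgamma^*$ on the $\hat{\bF}$-block and $c_1\bSigma_{\bW\bF^0}\bG\bH_0^{-1}\bgamma^*$ on the $\bW$-block, with $\bH_0^{-1}\bgamma^*$ appearing because $\bgamma^{0\dag}=\hat{\bgamma}$ converges in probability to $\bgamma^0 = \bH^{-1}\bgamma^*$ and $\bH\CP\bH_0$ by Assumption \ref{ass:Aug_errors}(ii).

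For the variance piece, I would apply Assumption \ref{ass:Aug_errors_b}(ii) to obtain $T^{-1/2}\hat{\bZ}^{\dag\prime}\bepsilon^{\dag}\CDb N(\bzero,\bSigma_{\hat{\bZ}\bepsilon^{\dag}})$ in probability, then use Assumption \ref{ass:bootstrap_b} to replace $\bSigma_{\hat{\bZ}\bepsilon^{\dag}}$ and $\bGamma^{\dag}$ by their population probability limits $\bSigma_{\bZ^0\bepsilon}$ and $\bGamma$. Combined with $T^{-1}\hat{\bZ}^{\dag\prime}\hat{\bZ}^{\dag}\CPb \bSigma_{\bZ^0\bZ^0}$ (which follows from Assumption \ref{ass:factor and loadings_b} and the bootstrap PC consistency of $\hat{\bz}_t^{\dag}$ for $\bz_t^0$), Slutsky's theorem then delivers the asymptotic normal distribution with covariance $\bSigma_{\bdelta}=\bSigma_{\bZ^0\bZ^0}^{-1}\bSigma_{\bZ^0\bepsilon}\bSigma_{\bZ^0\bZ^0}^{-1}$.

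The hard part will be the bootstrap weak-factor expansion of $\hat{\bF}^{\dag} - \bF^{0\dag}\tilde{\bH}^{\dag}$: the scaling matrices $\bN$, $\bD$ and $\hat{\bLambda}^{\dag}$ must be carefully tracked through each of the four cross-product terms to verify that the non-bias remainders are $o_{p^\dag}(T^{-1/2})$ uniformly in probability, under $\alpha_r>1/2$ and $N^{1-\alpha_r}/\sqrt{T}\to 0$. This parallels the original-sample derivation of \citet{jiang2024Mw}, but because the bootstrap moment bounds only hold \emph{in probability} (Assumptions \ref{ass:errors_b}--\ref{ass:2errors_b}), each remainder has to be controlled first in the conditional sense and then converted back to unconditional statements via Assumption \ref{ass:bootstrap_b}. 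The exact identification of the bias coefficient, in particular the scaling $\nu$ that enters only in the weak-factor regime, is the one place in the argument where the $\bGamma^{\dag}\to\bGamma$ transfer must be executed with the most care.
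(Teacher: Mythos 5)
Your proposal follows essentially the same route as the paper's proof. You start from the identical algebraic decomposition
$\sqrt{T}(\hat{\bdelta}^{\dag}-\bdelta_{\hat{\bH}^{\dag}})
= (T^{-1}\hat{\bZ}^{\dag\prime}\hat{\bZ}^{\dag})^{-1}T^{-1/2}\hat{\bZ}^{\dag\prime}\bepsilon^{\dag}
- (T^{-1}\hat{\bZ}^{\dag\prime}\hat{\bZ}^{\dag})^{-1}T^{-1/2}\hat{\bZ}^{\dag\prime}(\hat{\bF}^{\dag}-\bF^{0\dag}\tilde{\bH}^{\dag})\tilde{\bH}^{\dag-1}\bgamma^{0\dag}$,
handle the Gram matrix and score term via bootstrap consistency and the CLT assumption, isolate the $\bE^{\dag}\bE^{\dag\prime}$ cross-product as the source of the bias, transfer $\bGamma^{\dag}\to\bGamma$, $\bD^{\dag}\to\bD$ via Assumption~$7^{\dag}$, and use $\bgamma^{0\dag}\CP\bH_0^{-1}\bgamma^*$ — exactly the chain of Lemmas~1--3 and the main proof in the paper. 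One small imprecision worth noting: you write that Assumption~$6^{\dag}$(ii) directly delivers a CLT for $T^{-1/2}\hat{\bZ}^{\dag\prime}\bepsilon^{\dag}$, but that assumption is stated for $T^{-1/2}\hat{\bZ}^{\prime}\bepsilon^{\dag}$ where $\hat{\bZ}=\bZ^{0\dag}$ is the \emph{original-sample} regressor matrix (the bootstrap ``truth''), not the bootstrap PC regressor $\hat{\bZ}^{\dag}=(\hat{\bF}^{\dag},\bW)$. You therefore first need the replacement $T^{-1/2}\hat{\bZ}^{\dag\prime}\bepsilon^{\dag}=T^{-1/2}\bZ^{0\dag\prime}\bepsilon^{\dag}+o_{p^{\dag}}(1)$, which is what the paper's Lemma~2 (the bound on $T^{-1/2}(\hat{\bF}^{\dag}-\bF^{0\dag}\tilde{\bH}^{\dag})'\bepsilon^{\dag}$) supplies; since you already flag the control of $\hat{\bF}^{\dag}-\bF^{0\dag}\tilde{\bH}^{\dag}$ as the technical core, this would surface once you execute the details.
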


\begin{rem}
Together with the non-bootstrap asymptotic normality results established in \citet[Theorem 1]{jiang2024Mw}, $\sqrt{T}(\hat{\bdelta}-\bdelta_{\hat{\bH}})\CD N\left(-c_1 \bkappa_{\bdelta^*}, \bSigma_{\bdelta}\right)$ under the same conditions, it is straightforward to show the bootstrap validity: \begin{align}
	\sup_{\bx\in \bbR^{r+p}}{|{\Pr}^{\dag}[\sqrt{T}(\hat{\bdelta}^{\dag}-\bdelta_{\hat{\bH}^{\dag}})<\bx]-\Pr[\sqrt{T}(\hat{\bdelta}-\bdelta_{\hat{\bH}})<\bx]|}=o_p(1).
\end{align}
As analyzed in \citet[Corollary 1]{jiang2024Mw}, the expression of the asymptotic bias suggests a complicated asymptotic bias structure, depending on the structure of the divergence rates, $(\alpha_1,\ldots,\alpha_r)$. Nonetheless, the result shows that the bootstrap procedure can mimic the non-central distribution of $\sqrt{T}(\hat{\bdelta}-\bdelta_{\hat{\bH}})$ without requiring knowledge of the divergence rates, provided the conditions are satisfied.
\end{rem}

\begin{rem}As discussed in Remark~\ref{rem:1}, \citet{GoncalvesPerron2014,gonccalves2020bootstrapping} use the bootstrap statistic
$\sqrt{T}(\bPhi_{\tilde{\bH}^{\dag}}\hat{\bdelta}^{\dag}-\bdelta^{0\dag})
= \sqrt{T}\bPhi_{\tilde{\bH}^{\dag}}(\hat{\bdelta}^{\dag}-\bdelta_{\hat{\bH}^{\dag}})$ in our notation to mimic the distribution of $\sqrt{T}(\hat{\bdelta}-\bdelta_{\hat{\bH}})$; see Corollary~3.1 in \citet{GoncalvesPerron2014}. Since $\tilde{\bH}^{\dag}-\bI_r=o_{p^{\dag}}(1)$, the asymptotic bootstrap validity of their procedure,
$\sup_{\bx\in \bbR^{r+p}}
\big|{\Pr}^{\dag}[\sqrt{T}\bPhi_{\tilde{\bH}^{\dag}}(\hat{\bdelta}^{\dag}-\bdelta_{\hat{\bH}^{\dag}})<\bx]
-\Pr[\sqrt{T}(\hat{\bdelta}-\bdelta_{\hat{\bH}})<\bx]\big|
= o_p(1)$, 
follows immediately. This is consistent with their result, though our framework clarifies that their procedure requires an additional estimation of $\bI_r$ via $\tilde{\bH}^{\dag}$. 
\end{rem}

\begin{rem}There are three main differences between Theorem~\ref{thm:bias_Hhat} and the result in \citet[Theorem~3.1]{GoncalvesPerron2014} for SF models, which, ignoring sign indeterminacy, essentially states that 
$\sqrt{T}(\hat{\bdelta}^{\dag}-\bdelta^{0\dag}) \CDb N\!\left(-c_1 \bkappa_{\bdelta^*}, \bSigma_{\bdelta}\right)$, 
with $\bdelta^{0\dag}:=\hat{\bdelta}$ in our notation. 
First, their framework essentially treats $\bH_0^{-1}{\bgamma}^*$, where $\bH_0 = \plim_{N,T\rightarrow\infty} \hat{\bH}$, as the parameter of interest. By contrast, we consider $\bgamma^0:=\bH^{-1}{\bgamma}^*$ for finite $\{N,T\}$. We therefore do not construct a bootstrap analogue of $\bH_0$, as it has no counterpart in the original sample.
Second, consider a decomposition 
$\sqrt{T}(\hat{\bdelta}^{\dag}-\bdelta^{0\dag})
= \sqrt{T}(\hat{\bdelta}^{\dag}-\bdelta_{\hat{\bH}^{\dag}})
+ \sqrt{T}(\bdelta_{\hat{\bH}^{\dag}}-\bdelta^{0\dag}).$
For their result to hold, it is necessary that $\sqrt{T}(\bdelta_{\hat{\bH}^{\dag}}-\bdelta^{0\dag}) = o_{p^\dag}(1)$, but this appears to hold only under restrictive conditions.\footnote{For SF models, it can be shown that $\sqrt{T}(\bdelta_{\hat{\bH}^{\dag}}-\bdelta^{0\dag})=\sqrt{T}O_{p^{\dag}}(1/\min\{N,T\}))$, and this term does not necessarily vanish in probability when $\sqrt{T}/N \to c \in [0,\infty)$, which is precisely the condition under which a nonzero asymptotic bias may arise.}  
Third, a similar decomposition can be obtained via one of the alternative rotation matrices, such as $\hat{\bH}_q^{\dag}$. However, as shown in Theorem~\ref{thm:bias_H3} below, the corresponding asymptotic bias differs from $-c_1 \bkappa_{\bdelta^*}$, which introduces ambiguity in interpreting their result.
We address these three points in Section \ref{subsubsec:3}, where we derive the asymptotic distribution of $\sqrt{T}(\hat{\bdelta}^{\dag}-\bdelta^{0\dag})$. 
\end{rem}

\subsubsection{The Case of $\hat{\bH}_q$}
We next present the result for the rotated parameter vector under the alternative data-dependent rotation matrix $\hat{\bH}_q$.

\begin{thm}\label{thm:bias_H3}
Suppose Assumptions \ref{ass:eigen}–\ref{ass:Aug_errors} and \ref{ass:errors_b}--\ref{ass:bootstrap_b} hold. If $\alpha_r>\frac{1}{2}$, $\frac{N^{1-\alpha_r}}{\sqrt{T}} \to  0$, and $\sqrt{T}N^{-\alpha_r} \to c_2 \in [0,\infty)$, as $N, T \to \infty$, we have 
\[
\sqrt{T}(\hat{\bdelta}^{\dag}-\bdelta_{\hat{\bH}_q^{\dag}}) \CDb N\left(c_2 \bar{\bkappa}_{\bdelta^*}, \bSigma_{\bdelta}\right)
\]
in probability, with
\begin{align*}
	\bar{\bkappa}_{\bdelta^*} = \bSigma_{\bZ^0 \bZ^0}^{-1}  \binom{\mathbf{0}}{\bSigma_{\bW \bF^0} \bar{\bG} } \bH_0^{-1}  \bgamma^*,
\end{align*}
where $c_2 \bar{\bG} = \lim_{N, T \rightarrow \infty}\sqrt{T}\bN^{-\frac{1}{2}} \bD^{-1}\bGamma  \bD^{-1}\bN^{-\frac{1}{2}} $. 
\end{thm}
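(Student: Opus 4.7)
The plan is to mirror the strategy used in the proof of Theorem \ref{thm:bias_Hhat}, but to exploit an orthogonality identity that is built into the $\hat{\bH}_q$ rotation and eliminates the leading bias contribution in the factor block of the score. The starting point is the algebraic identity
\begin{align*}
\sqrt{T}\bigl(\hat{\bdelta}^{\dag}-\bdelta_{\hat{\bH}_q^{\dag}}\bigr)
= \bigl(T^{-1}\hat{\bZ}^{\dag\prime}\hat{\bZ}^{\dag}\bigr)^{-1}
\Bigl\{T^{-1/2}\hat{\bZ}^{\dag\prime}\bigl(\bF^{0\dag}-\hat{\bF}^{\dag}\tilde{\bH}_q^{\dag -1}\bigr)\bgamma^{0\dag}+T^{-1/2}\hat{\bZ}^{\dag\prime}\bepsilon^{\dag}\Bigr\},
\end{align*}
obtained by substituting $\by^{\dag}=\bZ^{0\dag}\bdelta^{0\dag}+\bepsilon^{\dag}$ and using $\bdelta_{\hat{\bH}_q^{\dag}}=\bPhi_{\tilde{\bH}_q^{\dag}}^{-1}\bdelta^{0\dag}$. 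First, I would verify that $(T^{-1}\hat{\bZ}^{\dag\prime}\hat{\bZ}^{\dag})^{-1}\to\bSigma_{\bZ^0\bZ^0}^{-1}$ in probability, using $\hat{\bF}^{\dag\prime}\hat{\bF}^{\dag}=T\bI_r$ together with Assumption \ref{ass:Aug_errors}(iii) and the bootstrap consistency of $\hat{\bB}^{\dag},\hat{\bF}^{\dag}$ implied by Assumptions \ref{ass:errors_b}--\ref{ass:bootstrap_b}. The score term $T^{-1/2}\hat{\bZ}^{\dag\prime}\bepsilon^{\dag}$ is handled directly by Assumption \ref{ass:Aug_errors_b}(ii) and Assumption \ref{ass:bootstrap_b}, giving the $N(\mathbf{0},\bSigma_{\bZ^0\bepsilon})$ component and hence the variance $\bSigma_{\bdelta}$.

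The crux is the bias term $T^{-1/2}\hat{\bZ}^{\dag\prime}(\bF^{0\dag}-\hat{\bF}^{\dag}\tilde{\bH}_q^{\dag -1})\bgamma^{0\dag}$. Writing $\tilde{\bH}_q^{\dag -1}=T^{-1}\hat{\bF}^{\dag\prime}\bF^{0\dag}$, the residual equals $\bM_{\hat{\bF}^{\dag}}\bF^{0\dag}$ with $\bM_{\hat{\bF}^{\dag}}=\bI_T-T^{-1}\hat{\bF}^{\dag}\hat{\bF}^{\dag\prime}$. Because $\hat{\bF}^{\dag\prime}\bM_{\hat{\bF}^{\dag}}=\mathbf{0}$, the first $r$ rows of the bias vector are \emph{identically zero}, explaining the $\mathbf{0}$-block in $\bar{\bkappa}_{\bdelta^*}$. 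The remaining $p$ rows equal $T^{-1/2}\bW^{\prime}\bM_{\hat{\bF}^{\dag}}\bF^{0\dag}\bgamma^{0\dag} = T^{-1/2}\bW^{\prime}\bF^{0\dag}\bgamma^{0\dag}-T^{-1/2}\bW^{\prime}\hat{\bF}^{\dag}\tilde{\bH}_q^{\dag -1}\bgamma^{0\dag}$. The next step is to plug in the bootstrap analogue of the PC expansion for $\hat{\bF}^{\dag}-\bF^{0\dag}\tilde{\bH}_q^{\dag}$ (the expansion underlying \eqref{FHqhat} in the bootstrap world), which contains four standard terms driven by $\bE^{\dag}$. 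Only the term proportional to $\bN^{-1/2}\sum_{i}\hat{\bb}_i e_{t,i}^{\dag}$ survives after multiplication by $T^{-1/2}\bW^{\prime}$ at the rate $\sqrt{T}N^{-\alpha_r}$; using Assumption \ref{ass:factor and loadings_b}(iv) together with $\plim\bGamma^{\dag}=\bGamma$ from Assumption \ref{ass:bootstrap_b}, this piece converges in bootstrap probability to $c_2\,\bSigma_{\bW\bF^0}\bar{\bG}\,\bH_0^{-1}\bgamma^*$. The other three terms are shown to be $o_{p^{\dag}}(1)$ under $\alpha_r>1/2$ and $N^{1-\alpha_r}/\sqrt{T}\to 0$, by arguments paralleling those used for Theorem \ref{thm:bias_Hhat}.

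Putting the three pieces together and applying Slutsky in the bootstrap world yields $N(c_2\bar{\bkappa}_{\bdelta^*},\bSigma_{\bdelta})$ in probability. The main obstacle, and where most of the technical effort lies, is the fourth-term expansion of $\hat{\bF}^{\dag}-\bF^{0\dag}\tilde{\bH}_q^{\dag}$ in the bootstrap world: one must show that the only contribution surviving at rate $\sqrt{T}N^{-\alpha_r}$ is the one involving the idiosyncratic covariance $\bGamma^{\dag}$, and that its conditional mean matches $\bar{\bG}$ after the appropriate weighting by $\bD^{-1}\bN^{-1/2}$. This requires carefully bounding cross-product terms such as $T^{-1}\bW^{\prime}\bE^{\dag}\hat{\bB}/N^{\alpha_r}$ and establishing, via Assumption \ref{ass:bootstrap_b}, that the bootstrap limit of $\sqrt{T}\bN^{-1/2}\bD^{-1}\bGamma^{\dag}\bD^{-1}\bN^{-1/2}$ coincides with $c_2\bar{\bG}$. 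Everything else—the factor-block cancellation, the score CLT, and the Gram matrix limit—is essentially algebraic once this central expansion is in hand.
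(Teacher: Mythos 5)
Your proposal is correct and follows essentially the same route as the paper: write the decomposition $\sqrt{T}(\hat{\bdelta}^{\dag}-\bdelta_{\hat{\bH}_q^{\dag}}) = (T^{-1}\hat{\bZ}^{\dag\prime}\hat{\bZ}^{\dag})^{-1}\{\text{score} - \text{factor-error bias}\}$, kill the factor block via $\hat{\bF}^{\dag\prime}(\hat{\bF}^{\dag}-\bF^{0\dag}\tilde{\bH}_q^{\dag}) = T\bI_r - T\bI_r = \mathbf{0}$ (your $\bM_{\hat{\bF}^{\dag}}$ projection framing is an equivalent rewrite of this), isolate the $\bW$ block via the bootstrap PC expansion of $\hat{\bF}^{\dag}-\bF^{0\dag}\tilde{\bH}_q^{\dag}$, and invoke the bootstrap CLT and Slutsky. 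The only point you gloss over is that passing from $T^{-1/2}\hat{\bZ}^{\dag\prime}\bepsilon^{\dag}$ to the CLT stated in Assumption $6^{\dag}$(ii) (which is in terms of $\hat{\bZ}$, i.e.\ $\bZ^{0\dag}$ in the bootstrap world) requires the negligibility bound on $T^{-1/2}(\hat{\bF}^{\dag}-\bF^{0\dag}\tilde{\bH}_q^{\dag})'\bepsilon^{\dag}$, which the paper supplies via Lemma \ref{lem:F*epsilon}(ii); but you do appeal to the parallel with Theorem \ref{thm:bias_Hhat} for such remainders, so the idea is present.
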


Again, together with the non-bootstrap asymptotic normality results for $\sqrt{T}(\hat{\bdelta}-\bdelta_{\hat{\bH}_q})\CD N\left(c_2 \bar{\bkappa}_{\bdelta^*}, \bSigma_{\bdelta}\right)$, established in \citet[Theorem 2]{jiang2024Mw} under the same conditions, it is straightforward to establish the bootstrap validity. 

Theorem~\ref{thm:bias_H3} suggests that, in general, both $\sqrt{T}(\hat{\bdelta}-\bdelta_{\hat{\bH}_q})$ and its bootstrap counterpart converge to their limiting distribution faster and exhibit smaller asymptotic bias than $\sqrt{T}(\hat{\bdelta}-\bdelta_{\hat{\bH}})$ and its bootstrap analogue. Moreover, they are asymptotically unbiased when $\bF^0$ and $\bW$ are (asymptotically) uncorrelated.
Therefore, for bootstrap asymptotic analysis in factor-augmented regressions, it is preferable to adopt the approximation $\hat{\bF} = \bF^* \hat{\bH}_q + o_p(1)$ rather than $\hat{\bF} = \bF^* \hat{\bH} + o_p(1)$, in both the bootstrap and original samples.

\subsubsection{The Case of ${\bH}$}\label{subsubsec:3}
Now let us investigate the bootstrap analogue of $\sqrt{T}(\hat\bdelta  - \bdelta^0)$. Since $\sqrt{T}(\hat{\bdelta}-\bdelta_{\hat{\bH}_q})$ typically exhibits more favorable asymptotic properties than $\sqrt{T}(\hat{\bdelta}-\bdelta_{\hat{\bH}})$, and is in fact the most favorable among the asymptotically equivalent rotation matrices considered in \citet{BaiNg2023}, it is natural to consider the decomposition $\sqrt{T}(\hat\bdelta  - \bdelta^0) = \sqrt{T}(\hat\bdelta  - \bdelta_{\hat{\bH}_q}) + \sqrt{T}(\bdelta_{\hat{\bH}_q}-\bdelta^0)$.
The first term has already been analyzed. For the second term, we obtain 
$\sqrt{T}(\bdelta_{\hat{\bH}_q}-\bdelta^0)=
\big(
\begin{smallmatrix}
\sqrt{T}(\tilde{\bH}_q^{-1}-\bI_r)\bH^{-1}\bgamma^*\\
\mathbf{0}_p
\end{smallmatrix} 
\big)
=O_p(\sqrt{T}N^{\frac{1}{2}\alpha_1-\frac{3}{2}\alpha_r})$, 
but no explicit bias expression is available. 
Following \cite{jiang2024Mw}, we assume that $\sqrt{T}(\bdelta_{\hat{\bH}_q}-\bdelta^0)$ converges in probability to a bounded constant vector, say $c_1\bh_{\bgamma^*}$, when $\sqrt{T}N^{\frac{1}{2}\alpha_1-\frac{3}{2}\alpha_r} \to c_1 \in [0,\infty)$. 
For the bootstrap counterpart, we have $\sqrt{T}(\bdelta_{\hat{\bH}_q^{\dag}}-\bdelta^{0\dag})=
\big(
\begin{smallmatrix}
\sqrt{T}(\tilde{\bH}_q^{{\dag}-1}-\bI_r)\hat{\bgamma}\\
\mathbf{0}_p
\end{smallmatrix} 
\big)
=O_{p^\dag}(\sqrt{T}N^{\frac{1}{2}\alpha_1-\frac{3}{2}\alpha_r})$, in probability. 
Since $\bdelta_{\hat{\bH}_q^{\dag}}-\bdelta^{0\dag}$ is the bootstrap analogue of $\bdelta_{\hat{\bH}_q} -\bdelta^{0}$, we impose the analogous assumption that $\sqrt{T}(\bdelta_{\hat{\bH}_q^{\dag}}-\bdelta^{0\dag}) \CPb c_1 {\bh_{\bgamma}^\dag}$ in probability, where $\bh_{\bgamma}^\dag$ depends on $\hat{\bF}$ and $\hat{\bgamma}$. To ensure the bootstrap asymptotic validity, analogously to Assumption $7^\dag$, we further assume that $\plim\bh_{\bgamma}^\dag =\bh_{\bgamma^*}$.
This discussion leads to the following assumption.

\begin{assb}\label{ass:8_b}
$\sqrt{T}(\bdelta_{\hat{\bH}_q}-\bdelta^{0}) \CP c_1 {\bh_{\bgamma^*}}$, where $\bh_{\bgamma^*}$ is a constant vector with its last $p$ entries equal to zero and $||\bh_{\bgamma^*}||_2 \leq M$. In addition, $\sqrt{T}(\bdelta_{\hat{\bH}_q^{\dag}}-\bdelta^{0\dag}) \CPb c_1 {\bh_{\bgamma}^\dag}$ in probability, where $\bh_{\bgamma}^\dag=O_p(1)$ and $\plim\bh_{\bgamma}^\dag=\bh_{\bgamma^*}$.
\end{assb}

We are now ready to present the results on the bootstrap analogue of $\sqrt{T}(\hat{\bdelta}-\bdelta^0)$.
\begin{thm}\label{thm:bias_H}
Suppose that Assumptions \ref{ass:eigen}–\ref{ass:Aug_errors} and \ref{ass:errors_b}--\ref{ass:8_b} hold, and that $\alpha_r>\frac{1}{2}$, $\frac{N^{1-\alpha_r}}{\sqrt{T}} \to  0$, $\sqrt{T}N^{\frac{1}{2}\alpha_1-\frac{3}{2}\alpha_r} \to c_1 \in [0,\infty)$, $\sqrt{T}N^{-\alpha_r} \to c_2 \in [0,\infty)$. 
Then, we have 
\begin{align*}
	& \sqrt{T}(\hat\bdelta^{\dag}  - \bdelta^{0^{\dag}})  \CDb
	N\left(c_1 {\bh_{\bgamma^*}} + c_2 \bar{\bkappa}_{\bdelta^*} ,\bSigma_{\bdelta}\right), \; \text{in probability}.
\end{align*}
\end{thm}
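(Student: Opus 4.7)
The plan is to combine Theorem~\ref{thm:bias_H3} with Assumption~\ref{ass:8_b} via the decomposition already anticipated in the discussion preceding the theorem:
\begin{align*}
\sqrt{T}(\hat{\bdelta}^{\dag}-\bdelta^{0\dag})
= \underbrace{\sqrt{T}(\hat{\bdelta}^{\dag}-\bdelta_{\hat{\bH}_q^{\dag}})}_{A_T^{\dag}}
+ \underbrace{\sqrt{T}(\bdelta_{\hat{\bH}_q^{\dag}}-\bdelta^{0\dag})}_{B_T^{\dag}}.
\end{align*}

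First, the rate conditions imposed here are precisely those of Theorem~\ref{thm:bias_H3}, so that $A_T^{\dag} \CDb N(c_2\bar{\bkappa}_{\bdelta^*},\bSigma_{\bdelta})$ in probability. Second, Assumption~\ref{ass:8_b} delivers $B_T^{\dag} \CPb c_1\bh_{\bgamma}^{\dag}$ in probability, where $\bh_{\bgamma}^{\dag}$ depends only on the original sample and is therefore a constant under the bootstrap law. A bootstrap Slutsky argument, applied conditionally on the original sample, then yields
\begin{align*}
A_T^{\dag}+B_T^{\dag} \CDb N\!\left(c_1\bh_{\bgamma}^{\dag}+c_2\bar{\bkappa}_{\bdelta^*},\,\bSigma_{\bdelta}\right),\quad \text{in probability.}
\end{align*}

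To reach the form stated in the theorem, the sample-dependent mean $c_1\bh_{\bgamma}^{\dag}$ must be replaced by its probability limit $c_1\bh_{\bgamma^*}$, which is guaranteed by the second half of Assumption~\ref{ass:8_b}. Since the covariance $\bSigma_{\bdelta}$ is deterministic and the mean is a continuous function of $\bh_{\bgamma}^{\dag}$, the characteristic function of the conditional Gaussian law converges in probability, pointwise in its argument, to the characteristic function of $N(c_1\bh_{\bgamma^*}+c_2\bar{\bkappa}_{\bdelta^*},\bSigma_{\bdelta})$. The standard subsequence device — along any subsequence, extract a further subsequence on which $\bh_{\bgamma}^{\dag}\to\bh_{\bgamma^*}$ almost surely and apply deterministic Slutsky — upgrades this to the desired bootstrap weak convergence in probability.

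The main obstacle, such as it is, lies in managing the two nested layers of randomness: conditional on the original sample, the bootstrap law of $A_T^{\dag}$ is already tight and asymptotically Gaussian, and $B_T^{\dag}$ collapses (in bootstrap probability) onto the sample-dependent vector $c_1\bh_{\bgamma}^{\dag}$; unconditionally, that limit object must then be transported to $c_1\bh_{\bgamma^*}$. No new moment expansion or bias calculation is required — Theorem~\ref{thm:bias_H3} and Assumption~\ref{ass:8_b} absorb the analytically demanding steps, and the proof reduces to a careful two-level Slutsky argument.
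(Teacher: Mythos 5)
Your proof is correct and follows essentially the same route as the paper: the decomposition $\sqrt{T}(\hat{\bdelta}^{\dag}-\bdelta^{0\dag}) = \sqrt{T}(\hat{\bdelta}^{\dag}-\bdelta_{\hat{\bH}_q^{\dag}}) + \sqrt{T}(\bdelta_{\hat{\bH}_q^{\dag}}-\bdelta^{0\dag})$, with Theorem~\ref{thm:bias_H3} supplying the limit of the first term and Assumption~\ref{ass:8_b} handling the second, combined by a bootstrap Slutsky step. Your added discussion of the two layers of randomness and the subsequence argument to pass from the sample-dependent center $c_1\bh_{\bgamma}^{\dag}$ to $c_1\bh_{\bgamma^*}$ is a somewhat more careful articulation of a step the paper leaves implicit, but it is not a different method.
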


Building on the non-bootstrap asymptotic normality result $\sqrt{T}(\hat{\bdelta}-\bdelta^0)\CD N(c_1 \bh_{\bgamma^*} +c_2 \bar{\bkappa}_{\bdelta^*},\bSigma_{\bdelta})$ established in \citet[Theorem~3]{jiang2024Mw} under the same conditions, the bootstrap validity follows immediately. 

\section{Monte Carlo Experiments}\label{sec: MC}

In this section, we examine the finite sample performance of the estimators of
the factor-augmented regressions. 

\subsection{Design}\label{sec:design}
We generate data according to 
$\mathbf{X=F}^{0}\mathbf{B}^{0\prime}+\mathbf{E}$, $\bX=(x_{t,i})$, $i=1,\dots,N$, $t=1,\dots,T$, $\mathbf{F}^{0}\in\mathbb{R}%
^{T\times r}$, $\mathbf{B}^{0}\in\mathbb{R}^{N\times r}$ are constructed as
follows. 
Define a positive definite matrix $\mathbf{D}=\diag(d_{1},\dots,d_{r})$
and $\mathbf{N}=\diag(N^{\alpha_{1}},\dots,N^{\alpha_{r}})$. 
Construct a $T\times N$ matrix $\mathbf{A}$ with elements drawn independently from $N(0,1)$ in each replication. 
Let $\mathbf{A=USV}^{\prime}$ be its singular value decomposition. 
Set $\mathbf{F}^{0}$ to the first $r$ columns of $\mathbf{U}$ multiplied by $\sqrt{T}$, and $\mathbf{B}^{0}$ to the first $r$ columns of
$\mathbf{V}$ post-multiplied by $\mathbf{D}^{1/2}\mathbf{N}^{1/2}$. 
Given an invertible $r \times r$ $\bH$, define
$\mathbf{F}^{\ast}=\mathbf{F}^{0}\mathbf{H}^{-1}$ and $\mathbf{B}^{\ast}=\mathbf{B}^{0}\mathbf{H}^{\prime}$. 
For experiments we set $\mathbf{H}=\left(
\begin{smallmatrix}
1 & 1/2\\
1/2 & 2
\end{smallmatrix}\right)$.
The error matrix $\mathbf{E}$ is cross-sectionally heteroskedastic but independent over $t$. Specifically, the $t^{th}$ row is generated as $\be_{t}=\boldsymbol{\Sigma}_{e}^{1/2}\boldsymbol{\xi}_{t}$ where $\boldsymbol{\xi}_{t}\sim i.i.d.N(\mathbf{0},\mathbf{I}_{N})$, and
$\boldsymbol{\Sigma} _{e}=diag(\sigma_{e1}^{2},...,\sigma_{eN}^{2})$ with $\sigma_{ei}^{2}\sim i.i.d.U[0.5,1.5]$, $i=1,2,...,N$. 
The factor-augmented regression is specified as
\[
y_{t+1}=\mathbf{f}_{t}^{0\prime} \bgamma^0+\mathbf{w}_{t}%
^{\prime}\boldsymbol{\beta}+\epsilon_{t+1}\text{, }t=1,\dots,T,
\]
where $\mathbf{f}_{t}^{0\prime}$ is the $t^{th}$ row of $\mathbf{F}^{0}$, $\mathbf{w}_{t}=(w_{t,1},\dots,w_{t,p})^{\prime}$ with $w_{t,p}=1$, 
and 
$w_{t,\ell}=\sigma_{w}[\rho_{fw}\mathbf{f}_{t}^{0\prime}\mathbf{1}_{r}%
r^{-1/2}+(1-\rho_{fw}^{2})^{1/2}\zeta_{t,\ell}]$,
with $\zeta_{t,\ell}\sim i.i.d.N(0,1\mathbf{)}$ for $\ell=1,\dots,p-1$, and
$\epsilon_{t+1}\sim i.i.d.N(0,\sigma_{\epsilon}^{2})$. 
We set 
$\boldsymbol{\gamma}^0=\mathbf{1}_{r}$ and $\boldsymbol{\beta}=\bone_{p}$, so
that $\boldsymbol{\gamma}^{\ast}=\mathbf{H}\boldsymbol{\gamma}^0$.

As implied by the theory, the correlation between $\mathbf{w}_{t}$ and
$\mathbf{f}_{t}$ affects the asymptotic bias of the estimator. We consider $\rho_{fw}=\{0,0.6\}$ 
while setting $\sigma_{w}^{2}=1$ and $\sigma_{\epsilon}^{2}=0.5$. 
We choose $r=2$ and $p=2$, and consider three factor models with different strengths:
$(\alpha_{1},\alpha_{2})=(1,1)$, $(1,0.8),$ $(0.8,0.6)$, with $(d_{1}
,d_{2})=(0.05,0.2)$, $(0.2,0.2)$ and $(0.2,0.2)$, respectively. Different
values of $d_{1}$ and $d_{2}$ are required in the case of $\alpha_{1}=\alpha_{2}$ to ensure identification of the two largest eigenvalues of
$\E[\mathbf{x}_{t}\mathbf{x}_{t}^{\prime}]$, denoted $\lambda_{1}$
and $\lambda_{2}$.

Suppose $\{\mathbf{x}_{t},\mathbf{w}_{t},\mathbf{y}_{t}\}$ are observable in
practice. Then $\mathbf{F}^{0}$ is estimated by principal components, taken as the $r$ eigenvectors of $T^{-1}\mathbf{XX}^{\prime}$ corresponding to its $r$ largest eigenvalues, multiplied by $\sqrt{T}$. 
The resulting PC estimator of $\mathbf{F}^{0}$ is denoted by $\mathbf{\hat{F}}$.  
If necessary, the column signs of $\hat{\bF}$ are adjusted so that all sample correlations $cor(\hat{f}_{k,t}, f_{k,t}^0)$, $k=1,2,\dots,r$, are positive.

The factor-augmented model is then estimated by regressing $y_{t+1}$ on $\mathbf{\hat{z}}
_{t}=(\mathbf{\hat{f}}_{t}^{\prime},\mathbf{w}_{t}^{\prime})^{\prime}$, yielding $\boldsymbol{\hat{\delta}}=(\boldsymbol{\hat{\gamma}
}^{\prime},\boldsymbol{\hat{\beta}}^{\prime})^{\prime}$. 
Using different rotation matrices, we evaluate the biases of the least squares estimators relative to alternative `parameter vectors'. 
Specifically, we compute the averages across replications of $\hat{\bdelta}-{\bdelta}_{\hat{\bH}}$, $\hat{\bdelta}-{\bdelta}_{\hat{\bH}_q}$ and $\hat{\bdelta}-{\bdelta}^0$. 

In addition, we consider associated bootstrap bias-corrected estimators, defined as 
\begin{align}\label{bcHhats}
\hat{\bdelta}_{bcb\hat{\bH}}=\hat{\bdelta}-\hat{\boldsymbol{b}}_{\hat{\bH}},~
\boldsymbol{\hat{\delta}}_{bcb\hat{\bH}_{q}}=\hat{\bdelta}-\hat{\boldsymbol{b}}_{{\hat{\bH}}_q}~\text{and }
\boldsymbol{\hat{\delta}}_{bcb}=\hat{\bdelta}-\hat{\boldsymbol{b}}_{{\bH}},
\end{align}
where 
$\hat{\boldsymbol{b}}_{\hat{\bH}}=B^{-1}\sum_{b=1}^{B} (\hat{\bdelta}^{\dag(b)} - \bdelta_{\hat{\bH}^{\dag(b)}})$, 
$\hat{\boldsymbol{b}}_{{\hat{\bH}}_q}=B^{-1}\sum_{b=1}^{B} (\hat{\bdelta}^{\dag(b)} - \bdelta_{\hat{\bH}_{q}^{\dag(b)}})$ and 
$\hat{\boldsymbol{b}}_{{\bH}}=B^{-1}\sum_{b=1}^{B} (\hat{\bdelta}^{\dag(b)} - \hat{\bdelta})$.
We report the biases of these estimators, namely $\hat{\bdelta}_{bcb\hat{\bH}} - \bdelta_{\hat{\bH}}$, $\hat{\bdelta}_{bcb\hat{\bH}_q} - \bdelta_{\hat{\bH}_q}$ and $\hat{\bdelta}_{bcb} - \bdelta^0$.

We also compare the performance of the proposed bootstrap algorithm with that of \citet[GP]{GoncalvesPerron2014, gonccalves2020bootstrapping}. Their bias-corrected estimators are defined as 
\begin{align}\label{bcHhats}
\hat{\bdelta}_{bcbGP\hat{\bH}}=\hat{\bdelta}-\hat{\boldsymbol{b}}_{GP\hat{\bH}}~\text{and}~
\hat{\bdelta}_{bcbGP\hat{\bH}_q}=\hat{\bdelta}-\hat{\boldsymbol{b}}_{GP\hat{\bH}_q},
\end{align}
where
$\hat{\boldsymbol{b}}_{GP\hat{\bH}}=B^{-1}\sum_{b=1}^{B} \bPhi_{\tilde{\bH}^{\dag (b)}}(\hat{\bdelta}^{\dag(b)} - \bdelta_{\hat{\bH}^{\dag(b)}})$ and
$\hat{\boldsymbol{b}}_{GP\hat{\bH}_q}=B^{-1}\sum_{b=1}^{B} \bPhi_{\tilde{\bH}_{q}^{\dag (b)}}(\hat{\bdelta}^{\dag(b)} - \bdelta_{\hat{\bH}_{q}^{\dag(b)}})$.
The biases $\hat{\bdelta}_{bcbGP\hat{\bH}} - \bdelta_{\hat{\bH}}$ and $\hat{\bdelta}_{bcbGP\hat{\bH}_q} - \bdelta_{\hat{\bH}_q}$ are also reported.

In \cite{jiang2024Mw}, instead of bootstrapping, the panel-split jackknife bias-corrected estimator for $\bdelta^0$ is proposed. It is therefore of interest to compare its performance with that of the corresponding bootstrap procedure developed here. The panel-split jackknife estimator is defined as
\begin{align}\label{bcjkest_R}
\boldsymbol{\hat{\delta}}_{bcjk}=2\boldsymbol{\hat{\delta}}-\boldsymbol{\hat
	{\delta}}_{jk}\ \text{with }\boldsymbol{\hat{\delta}}_{jk}=S^{-1}%
\sum\nolimits_{s=1}^{S}(\boldsymbol{\hat{\delta}}_{\mathcal{N}_{1}^{(s)}%
}+\boldsymbol{\hat{\delta}}_{\mathcal{N}_{2}^{(s)}})/2   
\end{align}
where $\boldsymbol{\hat{\delta}}_{\mathcal{N}_{j}^{(s)}}$ is obtained
regressing $\mathbf{y}$ on ($\mathbf{\hat{F}}_{\mathcal{N}_{j}^{(s)}%
},\mathbf{W}$), where $\mathbf{\hat{F}}_{\mathcal{N}_{j}^{(s)}}$ is the PC
factor extracted from $\mathbf{X}_{\mathcal{N}_{j}^{(s)}}$, where
$\mathbf{X}_{\mathcal{N}_{j}^{(s)}}=\{\mathbf{x}_{i\in\mathcal{N}%
_{j}^{(s)}}\}$, for $j=1,2$. 
Here, $\mathcal{N}_{1}^{(s)}$ and $\mathcal{N}_{2}^{(s)}$ denote the
two halves of the $N$ columns of $\mathbf{X}^{(s)}$, which are randomly re-ordered in each replication $s=1,\dots,S$. 
Randomization helps avoid potentially biased information on the factors in $\mathcal{N}_j$. 
When $N$ is odd, $\mathcal{N}_{1}^{(s)}$ and $\mathcal{N}_{2}^{(s)}$ share one common index. 
The order and the sign of the columns of $\mathbf{\hat{F}}_{\mathcal{N}_{j}^{(s)}}$ are adjusted in line with those of $\mathbf{\hat{F}}$, based on the correlation between the pair $(\mathbf{\hat{F}}_{\mathcal{N}_{j}^{(s)}},\mathbf{\hat{F}})$, for each of $j=1,2$. 
The bias of the estimator, $\boldsymbol{\hat{\delta}}_{bcjk}-\bdelta^0$, is reported.

The experiments are conducted for $(T,N)=(50,50),(100,100),(200,200)$ with 1,000 replications, $B=100$ and
$S=100$.

\subsection{Results}
Figure \ref{fig:bias.f2} plots the biases of the coefficient estimators for the second factor relative to their corresponding parameters. 
Biases are shown relative to $\bgamma_{\hat{\bH}}$ (\tcr{red}), $\bgamma_{\hat{\bH}_q}$ (\tcb{blue}), and $\bgamma^0$ (black). 
Thick solid lines denote uncorrected estimators; 
dashed lines indicate the jackknife bias-corrected estimator for $\bgamma^0$ (black) and the existing bootstrap bias-corrected estimator of \citet[GP]{GoncalvesPerron2014,gonccalves2020bootstrapping} for $\bgamma_{\hat{\bH}}$ (\tcr{red}) and $\bgamma_{\hat{\bH}_q}$ (\tcb{blue}); 
short dotted lines correspond to the bootstrap bias-corrected estimators proposed here.

\begin{figure}[!htb]	
\centering
\begin{subfigure}[b]{0.32\textwidth}
	\centering
	\includegraphics[width=\textwidth]{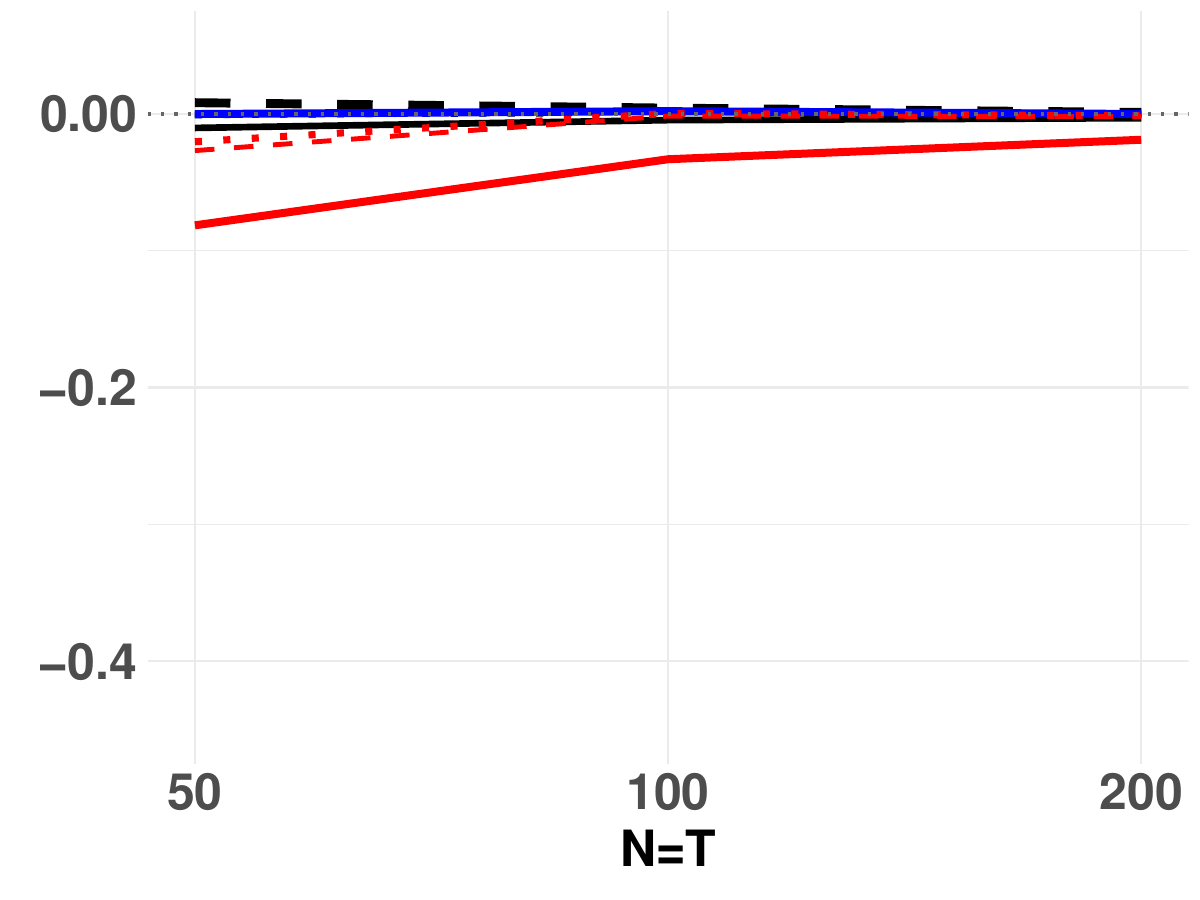}
	\caption{$\rho_{fw}=0.0,\alpha_2 = 1.0$}
	\label{fig:bias_f2_00_10}
\end{subfigure}
\hfill
\begin{subfigure}[b]{0.32\textwidth}
	\centering
	\includegraphics[width=\textwidth]{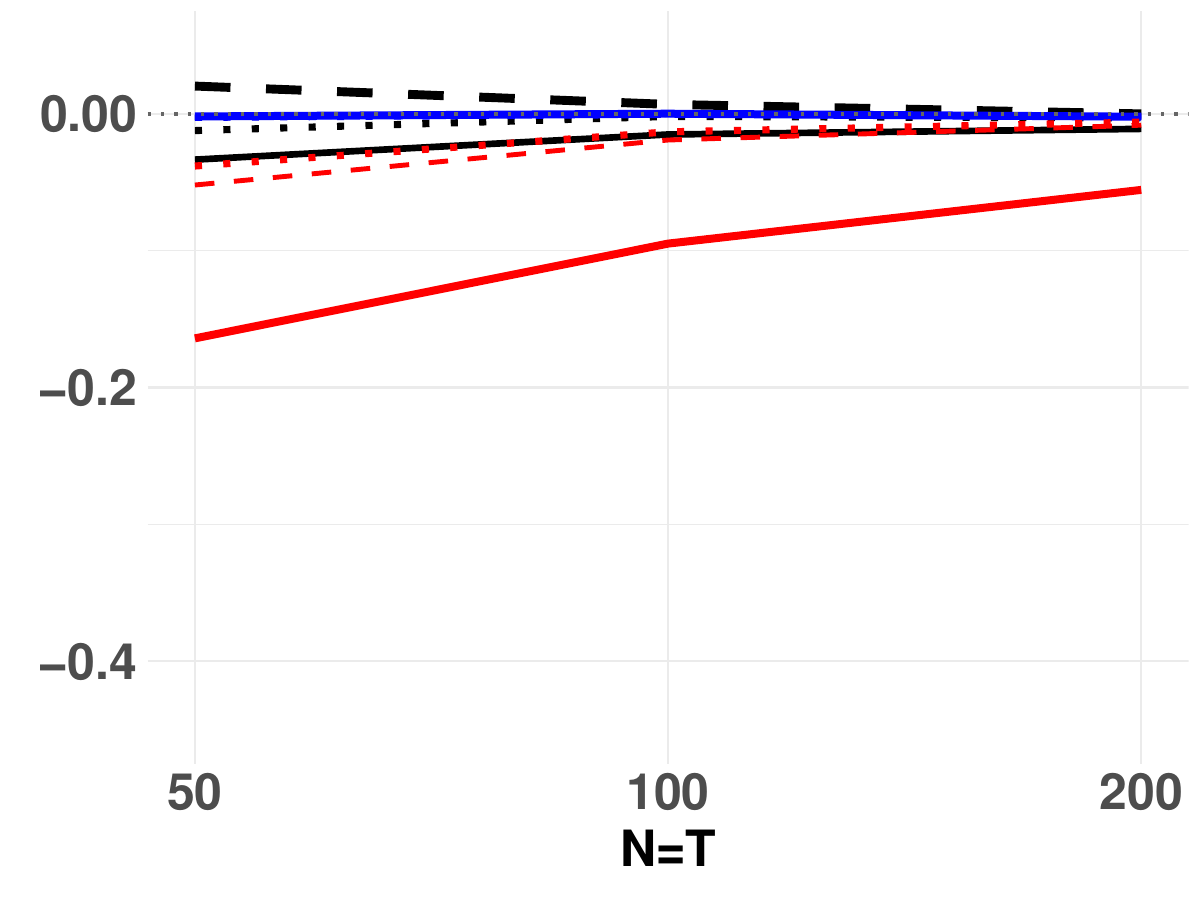}
	\caption{$\rho_{fw}=0.0,\alpha_2 = 0.8$}
	\label{fig:bias_f2_00_08}
\end{subfigure}
\hfill
\begin{subfigure}[b]{0.32\textwidth}
	\centering
	\includegraphics[width=\textwidth]{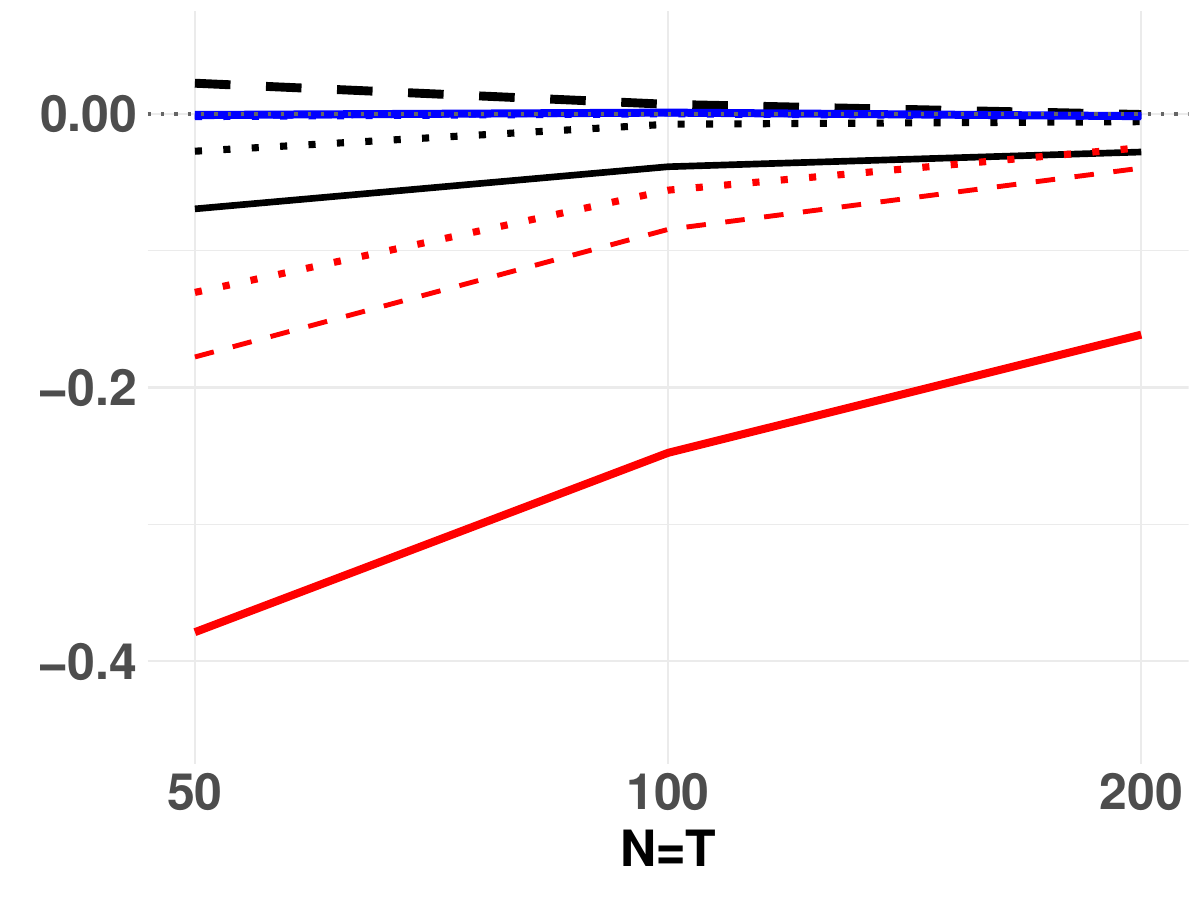}
	\caption{$\rho_{fw}=0.0,\alpha_2 = 0.6$}
	\label{fig:bias_f2_00_06}
\end{subfigure}

\centering
\begin{subfigure}[b]{0.32\textwidth}
	\centering
	\includegraphics[width=\textwidth]{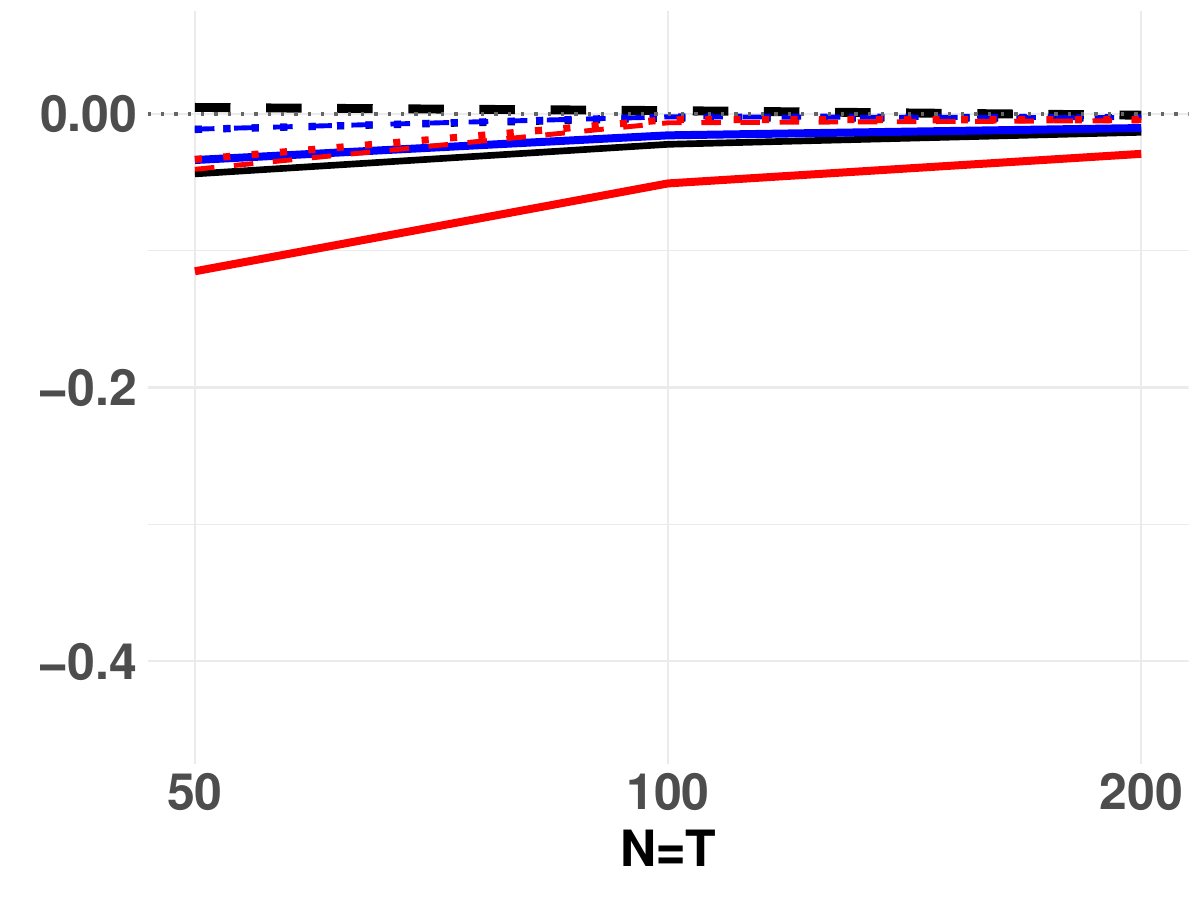}
	\caption{$\rho_{fw}=0.6,\alpha_2 = 1.0$}
	\label{fig:bias_f2_06_10}
\end{subfigure}
\hfill
\begin{subfigure}[b]{0.32\textwidth}
	\centering
	\includegraphics[width=\textwidth]{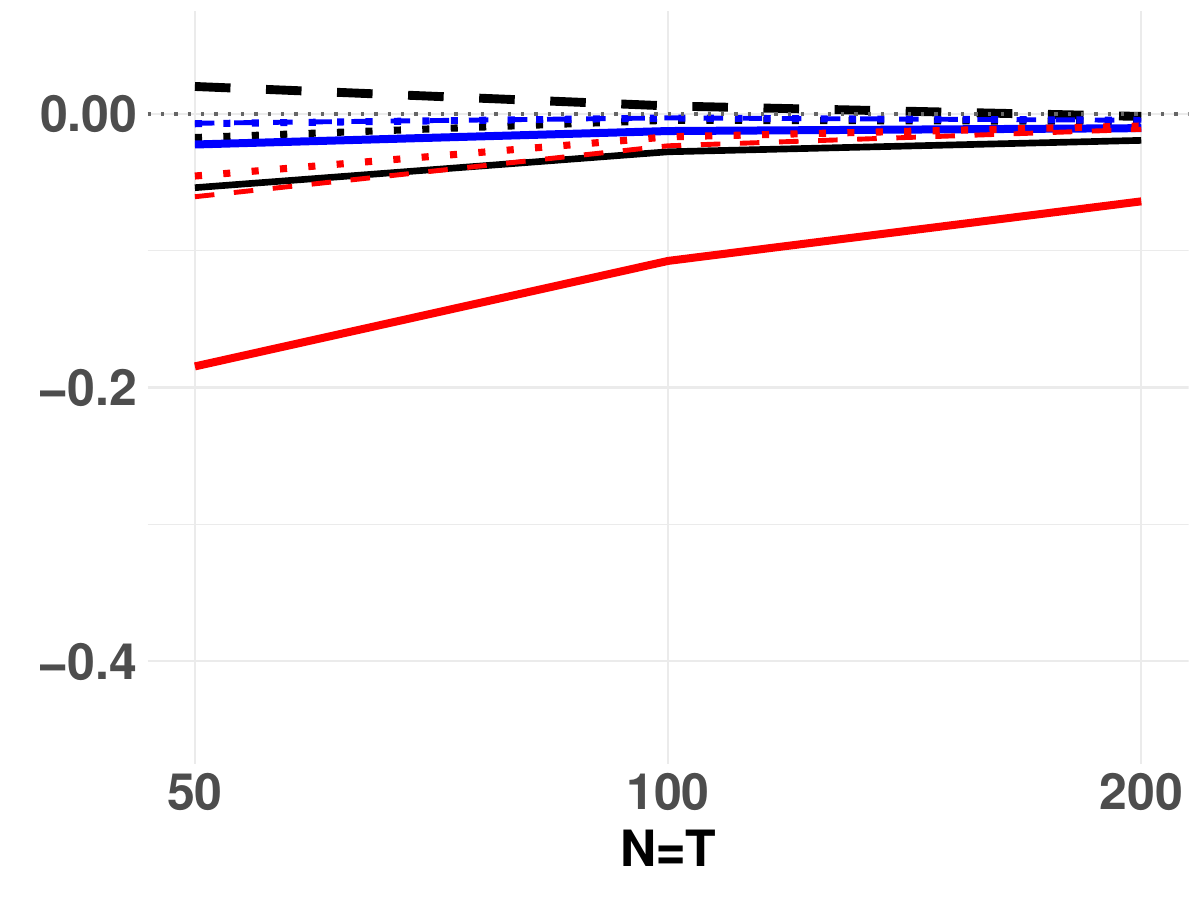}
	\caption{$\rho_{fw}=0.6,\alpha_2 = 0.8$}
	\label{fig:bias_f2_06_08}
\end{subfigure}
\hfill
\begin{subfigure}[b]{0.32\textwidth}
	\centering
	\includegraphics[width=\textwidth]{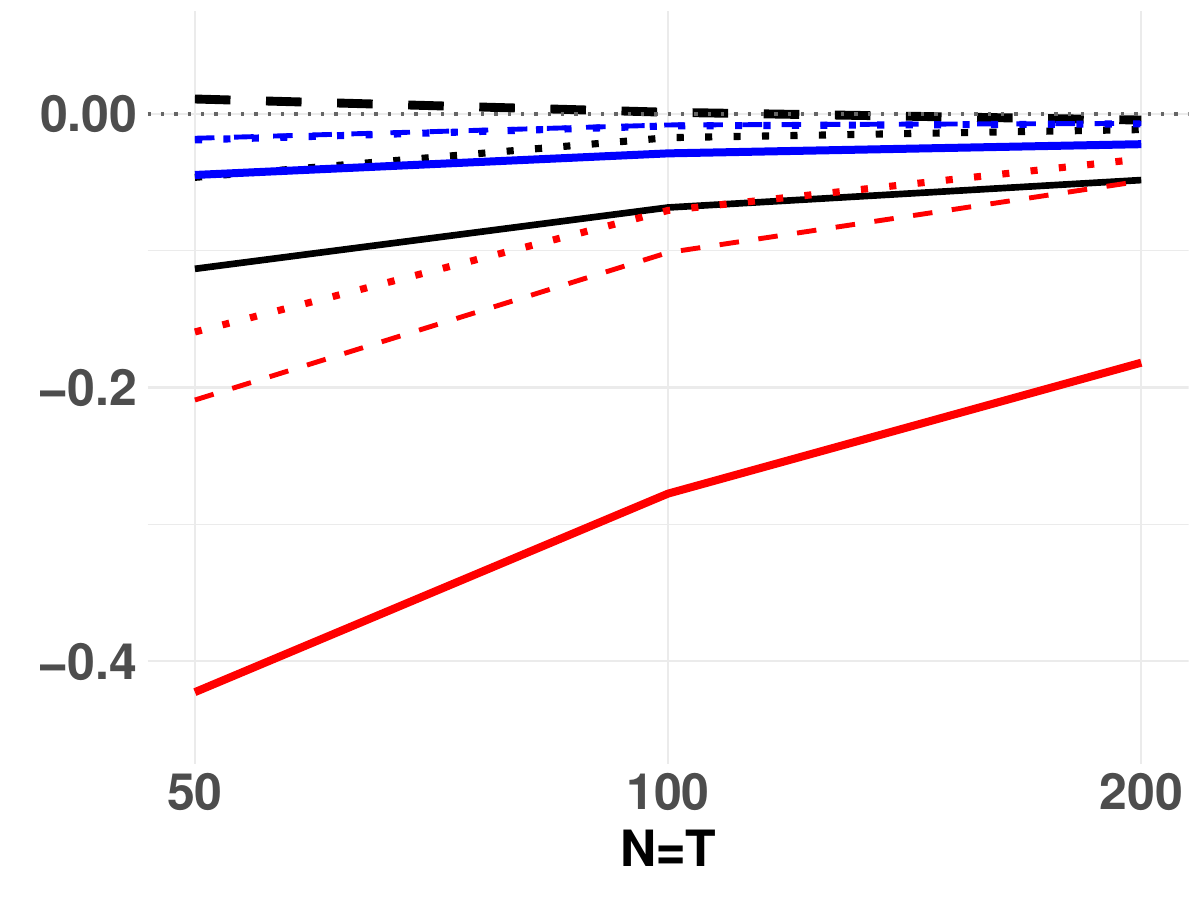}
	\caption{$\rho_{fw}=0.6,\alpha_2 = 0.6$}
	\label{fig:bias_f2_06_06}
\end{subfigure}

\centering
\includegraphics[width=0.60\textwidth]{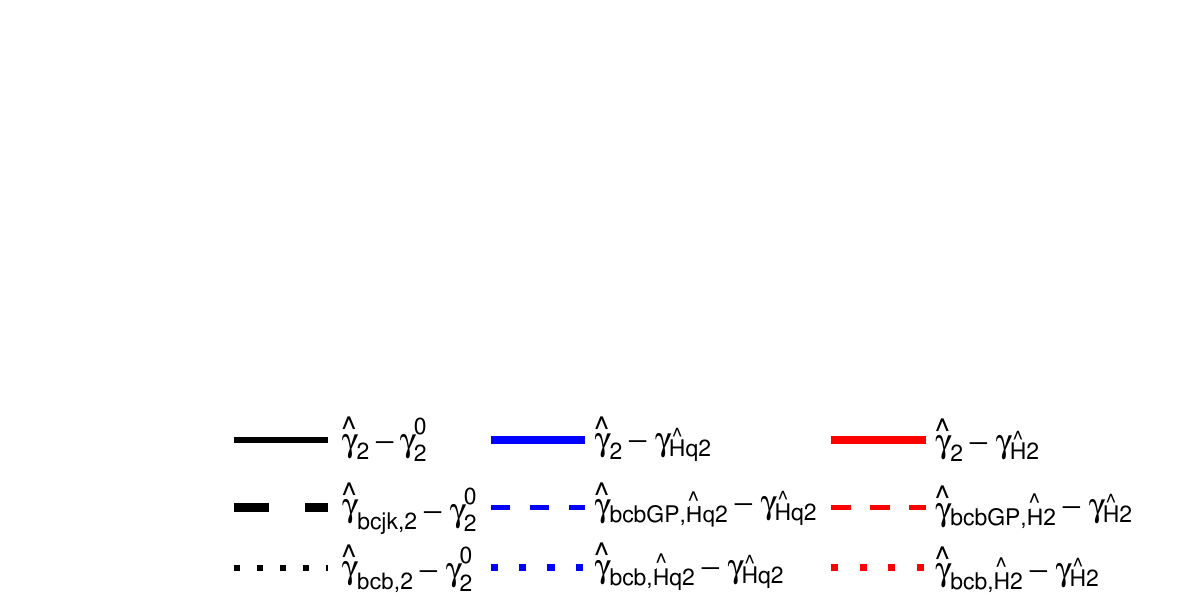}
\caption{Bias of $\hat{\gamma}_2$ and its bias corrected versions}
\label{fig:bias.f2}

\end{figure}

We begin with panels (a)–(c), where $\bff_t$ and $w_t$ are uncorrelated (i.e., $\rho_{fw} = 0$).
First, consider the red lines, which show biases relative to $\bgamma_{\hat{\bH}}$. The uncorrected estimator exhibits the largest bias, which worsens as the factor model weakens. Both bootstrap methods reduce this bias, with our proposed algorithm consistently outperforming that of \citet{GoncalvesPerron2014,gonccalves2020bootstrapping}.
Second, the blue lines correspond to biases relative to $\bgamma_{\hat{\bH}_q}$. Here, we observe a single flat line at zero, indicating that both $\hat{\bgamma}$ and the bias-corrected estimators are essentially unbiased with respect to $\bgamma_{\hat{\bH}_q}$.
Third, the black lines plot biases relative to the true parameter $\bgamma^0$. For strong factor models, $\hat{\bgamma}$ shows negligible bias, but as the model weakens, a small bias emerges. Both the jackknife and the proposed bootstrap effectively correct for this.

Turning to panels (d)–(f), where $\bff_t$ and $w_t$ are correlated (i.e., $\rho_{fw} = 0.8$), the biases of the uncorrected estimator $\hat{\bgamma}$ (solid lines) are consistently larger than in the uncorrelated case. Here, $\hat{\bgamma}-\bgamma_{\hat{\bH}_q}$ is no longer centered at zero, regardless of factor strength. Both bootstrap corrections reduce the bias, with our method again achieving greater reduction than GP’s. Bias properties relative to $\bgamma_{\hat{\bH}}$ remain similar to the uncorrelated case. As before, the jackknife correction performs comparably to our bootstrap method.

\section{Conclusion}\label{sec:con}

In this paper, we have proposed a novel bootstrap procedure that improves upon existing methods for replicating the asymptotic distribution of the factor-augmented regression estimator for a rotated parameter vector. The regression is augmented by $r$ factors extracted by the principal component (PC) method from a large panel of $N$ variables observed over $T$ time periods. We consider general weak factor (WF) models with $r$ signal eigenvalues that may diverge at different rates, $N^{\alpha _{k}}$, where $0<\alpha _{k}\leq 1$ for $k=1,2,...,r$. 

We have established the asymptotic validity of our bootstrap method not only under the conventional data-dependent rotation matrix $\hat{\bH}$, but also under an alternative data-dependent rotation matrix, $\hat{\bH}_q$, which generally yields smaller asymptotic bias and achieves faster convergence. 
Moreover, we have shown bootstrap validity under a purely signal-dependent rotation matrix ${\bH}$, which is unique and can be interpreted as the population analogue of both $\hat{\bH}$ and $\hat{\bH}_q$. This enables interpretation of the estimator's distribution relative to a parameter vector defined via ${\bH}$, which is of practical importance.

While the asymptotic bias in WF models depends intricately on the structure of the divergence rates $(\alpha_1, \dots, \alpha_r)$, our results have shown that the bootstrap procedure can mimic the non-central distribution without requiring knowledge of these divergence rates.

Our theoretical contribution has also resolved a couple of ambiguities in the literature. First, existing approaches often define the parameter of interest via the probability limit of a data-dependent rotation matrix, $\bH_0:=\plim_{N,T\rightarrow\infty}\hat{\bH}$, which is not observable or directly computable for bootstrapping. In contrast, we have proposed using a unique rotation matrix defined directly from the latent signal components at finite $\{N, T\}$ and constructible in bootstrap samples. Second, we have clarified the theoretical implications of using different data-dependent rotation matrices such as $\hat{\bH}_q$, and highlighted the importance of properly accounting for the limiting behavior of $\sqrt{T}(\hat{\bH} - \bH_0)$ in establishing bootstrap validity.

One natural extension is to apply this bootstrap method to out-of-sample forecasting, where confidence intervals are often sensitive to normality assumptions. As emphasized in \citet{GodfreyOrme2000} and \citet{gonccalves2020bootstrapping}, bootstrapping provides a practical alternative to such restrictive assumptions. Extending our approach to forecast evaluation remains an important direction for future research.

\section*{Acknowledgment}
We are grateful to Yoshimasa Uematsu for helpful discussions and useful comments. 

\section*{Funding}
This work was supported by JSPS KAKENHI (grant numbers 23H00804, 24K16343, 25K00625, 25K05036 and 25H00544).

{\setstretch{0.9}
\bibliographystyle{chicago}
\bibliography{references_wfr}
}

\newpage
\appendix
\setcounter{page}{1}
\setcounter{section}{0}

\begin{center}
{\Large Supplementary Material for \\[7mm]
	{\LARGE An alternative bootstrap procedure for factor-augmented regression models}} \\[10mm]
\textsc{\large Peiyun Jiang$^*$,} 
\textsc{\large Takashi Yamagata$^\dagger$} \\[5mm]
$^*$\textit{\large Faculty of Economics and Business Administration, Tokyo Metropolitan University} \\[1mm]
$\dagger$\textit{\large Department of Economics and Related Studies, University of York} \\[1mm]
$\dagger$\textit{\large Graduate School of Economics and Management, Tohoku University} \\[1mm]

\end{center}

\setcounter{lem}{0}
\renewcommand{\theequation}{A.\arabic{equation}}
\setcounter{equation}{0}

To proceed with the proof, we introduce new rotation matrices defined as follows: 
\begin{align*}
&  \tilde{\bH}_b^{\dag}= \bB^{0\dag\prime}\bB^{0\dag}\left( {\hat{\bB}}^{\dag\prime}\hat{\bB}^{\dag}\right)^{-1},\; \tilde{\bQ}^{\dag}=  \frac{1}{T}{\hat{\bF}}^{\dag\prime}\bF^{0\dag}.
\end{align*}
The bootstrap model $\bX^{\dag}=\bF^{0\dag}\bB^{0\dag\prime} +\bE^{\dag}$ satisfies the PC1 conditions:
\[
\frac{1}{T}\bF^{0\dag\prime}\bF^{0\dag}=\bI_r, \quad \bB^{0\dag\prime}\bB^{0\dag} \in \bD(r).
\]
Therefore, the framework of \cite{jiang2023revisiting} applies. It follows from their Lemma B.4 that these bootstrap rotation matrices denoted with “tilde” and “$\dag$” are all asymptotically equivalent to $\bI_r$ in probability. The following Lemmas are analogous to results in \citep[Lemmas B.3 -- B.4]{jiang2023revisiting} and \citep[Lemmas A.2 -- A.4]{jiang2024Mw}. We provide only the main arguments, as the complete derivations closely follow those in the cited literature.
\begin{lem}
\label{lem:Rotation} Define 
\begin{align}
	\label{Delta}
	\Delta_{NT}
	=
	\frac{N^{1-\alpha_r}}{T}+ N^{\frac{1}{2}\alpha_1- \alpha_r} \frac{N^{1- \alpha_r}}{T} + N^{\frac{1}{2}\alpha_{1}-\frac{3}{2}\alpha_{r}} +\frac{N^{\frac{1}{2}\alpha_{1}-\alpha_{r}}}{\sqrt{T}}.
\end{align}
Suppose that Assumptions \ref{ass:eigen}–\ref{ass:Aug_errors} and \ref{ass:errors_b}--\ref{ass:bootstrap_b} hold. If $\frac{N^{1-\alpha_r}}{T}\to 0$, then, the following statements hold in probability, as $N, T \to \infty$,
\begin{flalign*}
	&(i) ~		\left\|  \frac{1}{T} \bE^{\dag \prime}(\hat{\bF}^{\dag }- \bF^{0\dag}\tilde{\bH}_b^{\dag }) \right\|_{\F} 
	=  O_{p^{\dag}} \left( \frac{N^{1-\frac{1}{2}\alpha_r}}{T}  \right)+ O_{p^{\dag}} \left(  N^{-\frac{1}{2}\alpha_{r}} \right), & \\
	& (ii) ~		\left\|  \frac{1}{T} \bB^{0\dag\prime}\bE^{\dag \prime}(\hat{\bF}^{\dag }- \bF^{0\dag}\tilde{\bH}_b^{\dag }) \right\|_{\F} 
	= O_{p^{\dag}} \left( N^{\frac{1}{2}\alpha_{1}-\frac{1}{2}\alpha_{r}}\right)+ O_{p^{\dag}} \left(  N^{\frac{1}{2}\alpha_1-\frac{1}{2}\alpha_r} \frac{N^{1-\frac{1}{2}\alpha_r}}{T} \right), & \\
	&(iii) ~	 	\left\|  \frac{1}{T} \bN^{-\frac{1}{2}} \bB^{0\dag\prime}\bE^{\dag \prime}(\hat{\bF}^{\dag }- \bF^{0\dag}\tilde{\bH}_b^{\dag }) \right\|_{\F} 
	= O_{p^{\dag}} \left( N^{ -\frac{1}{2}\alpha_{r}}\right)+ O_{p^{\dag}} \left(    \frac{N^{1- \alpha_r}}{T} \right), & \\
	&(iv) ~		\left\| \frac{1}{T}\hat{\bF}^{\dag \prime}(\hat{\bF}^{\dag }-\bF^{0\dag}\tilde{\bH}^{\dag }) \right\|_{\F} 
	= O_{p^{\dag}} \left(\Delta_{NT} \right), & \\
	&(v)~ \left\|	 \frac{{\hat{\bF}}^{\dag \prime}\bF^{0\dag}}{T}-\bI_r \right\|_{\F} = O_{p^{\dag}}(\Delta_{NT}) .
\end{flalign*}	
\end{lem}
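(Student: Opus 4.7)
The plan is to establish Lemma~\ref{lem:Rotation} by adapting the non-bootstrap arguments of \citet[Lemmas~B.3--B.4]{jiang2023revisiting} and \citet[Lemmas~A.2--A.4]{jiang2024Mw} to the bootstrap probability measure, conditional on the realization of the original sample. Since the bootstrap DGP $\bX^{\dag}=\bF^{0\dag}\bB^{0\dag\prime}+\bE^{\dag}$ shares exactly the PC1 normalizations $T^{-1}\bF^{0\dag\prime}\bF^{0\dag}=\bI_r$ and $\bB^{0\dag\prime}\bB^{0\dag}$ diagonal with the pseudo-true model, and since Assumptions~$3^{\dag}$--$7^{\dag}$ are deliberately designed as conditional analogues of Assumptions~\ref{ass:errors}--\ref{ass:bootstrap_b}, the same machinery should carry through, with every $O_p$ replaced by $O_{p^{\dag}}$, in probability.

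First, I would derive the bootstrap analogue of the Bai--Ng-type decomposition by starting from the eigenvalue identity $T^{-1}\bX^{\dag}\bX^{\dag\prime}\hat{\bF}^{\dag}=\hat{\bF}^{\dag}\hat{\bLambda}^{\dag}$, substituting the bootstrap factor model, and using the definition of $\tilde{\bH}_b^{\dag}$ together with the PC1 normalizations. This yields an expansion of $\hat{\bF}^{\dag}-\bF^{0\dag}\tilde{\bH}_b^{\dag}$ into three cross terms of the form $T^{-1}\bF^{0\dag}\bB^{0\dag\prime}\bE^{\dag\prime}\hat{\bF}^{\dag}(\hat{\bLambda}^{\dag})^{-1}$, $T^{-1}\bE^{\dag}\bB^{0\dag}\bF^{0\dag\prime}\hat{\bF}^{\dag}(\hat{\bLambda}^{\dag})^{-1}$, and $T^{-1}\bE^{\dag}\bE^{\dag\prime}\hat{\bF}^{\dag}(\hat{\bLambda}^{\dag})^{-1}$. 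Parts~(i)--(iii) then follow by pre-multiplying this expansion by $T^{-1}\bE^{\dag\prime}$, $T^{-1}\bB^{0\dag\prime}\bE^{\dag\prime}$, and $T^{-1}\bN^{-1/2}\bB^{0\dag\prime}\bE^{\dag\prime}$, respectively, and bounding each resulting quantity term-by-term using the bootstrap spectral-norm bound $\|\bE^{\dag}\|_2^2=O_{p^{\dag}}(\max\{N,T\})$ from Assumption~$3^{\dag}$(iv), the conditional cross-moment controls of Assumption~$4^{\dag}$, and the WF eigenvalue identification $\|(\hat{\bLambda}^{\dag})^{-1}\|=O_p(N^{-\alpha_r})$ inherited from the original-sample consistency of $\hat{\bLambda}\bN^{-1}$.

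Parts~(iv) and (v) then follow from the asymptotic equivalence, in the bootstrap measure, of the various tilded rotation matrices to $\bI_r$, which is the bootstrap counterpart of \citet[Lemma~B.4]{jiang2023revisiting}. Specifically, $\tilde{\bH}^{\dag}-\tilde{\bH}_b^{\dag}$ can be written in terms of the cross terms already bounded in (i)--(iii), delivering $\tilde{\bH}^{\dag}-\tilde{\bH}_b^{\dag}=O_{p^{\dag}}(\Delta_{NT})$; pre-multiplying the basic decomposition by $T^{-1}\hat{\bF}^{\dag\prime}$ and invoking $T^{-1}\hat{\bF}^{\dag\prime}\hat{\bF}^{\dag}=\bI_r$ then yields (iv). For (v), the identity $T^{-1}\hat{\bF}^{\dag\prime}\bF^{0\dag}=[\bI_r-T^{-1}\hat{\bF}^{\dag\prime}(\hat{\bF}^{\dag}-\bF^{0\dag}\tilde{\bH}_b^{\dag})](\tilde{\bH}_b^{\dag})^{-1}$, combined with an (iv)-type bound on the second term and $\tilde{\bH}_b^{\dag}-\bI_r=O_{p^{\dag}}(\Delta_{NT})$, gives the claimed rate.

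The principal obstacle will be handling the double ``in probability'' qualifier: every $O_{p^{\dag}}$ bound must hold along almost all sample paths of the original data, which requires translating the moment assumptions~$3^{\dag}$--$5^{\dag}$, via conditional Markov inequalities, into bootstrap tightness whose outer ``in probability'' statement is uniform enough to justify chaining the bounds across the decomposition. A second technical subtlety concerns the weak-factor rescaling: the dominant contribution to $\Delta_{NT}$ in part~(iv) comes from the quadratic error term $T^{-1}\bE^{\dag}\bE^{\dag\prime}\hat{\bF}^{\dag}(\hat{\bLambda}^{\dag})^{-1}$, whose order must be tracked carefully through the interaction between the spectral bound on $\bE^{\dag}$ and the divergent-rate matrix $(\hat{\bLambda}^{\dag})^{-1}\asymp\bN^{-1}$. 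Aligning all four additive components of $\Delta_{NT}$ with their non-bootstrap counterparts is the main bookkeeping.
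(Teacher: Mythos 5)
Your proposal reconstructs exactly what the paper's one-line proof (``follows directly from Lemmas B.3--B.4 of jiang2023revisiting'') implies: because $(\bF^{0\dag},\bB^{0\dag})=(\hat\bF,\hat\bB)$ satisfy the same $r^2$ PC1 normalizations as $(\bF^0,\bB^0)$ and Assumptions~$3^\dag$--$7^\dag$ mirror Assumptions~3--7, the Bai--Ng eigen-identity decomposition and rotation-matrix equivalences transfer verbatim to the conditional (bootstrap) measure, replacing $O_p$ with $O_{p^\dag}$ in probability. One small imprecision worth noting: the three-term cross-product expansion that comes from $T^{-1}\bX^{\dag}\bX^{\dag\prime}\hat\bF^{\dag}=\hat\bF^{\dag}\hat\bLambda^{\dag}$ is the one for $\hat\bF^{\dag}-\bF^{0\dag}\tilde\bH^{\dag}$ (as displayed in the paper's proof of Lemma~A.3); with $\tilde\bH_b^{\dag}$ as in parts~(i)--(iii) there is a fourth term $\bF^{0\dag}\bB^{0\dag\prime}\bB^{0\dag}\bigl(T^{-1}\bF^{0\dag\prime}\hat\bF^{\dag}-\bI_r\bigr)\hat\bLambda^{\dag-1}$, so (i)--(iii) and (v) must be bootstrapped (pun intended) together, as the cited lemma does --- this is a bookkeeping matter and does not affect the argument's validity.
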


\begin{proof}[Proof of Lemma \ref{lem:Rotation}] The proof follows directly from \citep[Lemma B.3 -- B.4]{jiang2023revisiting}.
\end{proof}

\begin{lem} Suppose that Assumptions \ref{ass:eigen}–\ref{ass:Aug_errors} and \ref{ass:errors_b}--\ref{ass:bootstrap_b} hold.  If $\frac{N^{1-\alpha_r}}{T}\to 0$, then, the following statements hold in probability, as $N, T \to \infty$,
\label{lem:F*epsilon}
\begin{flalign}
	&(i) ~ \left\| \frac{1}{\sqrt{T}} (\hat{\bF}^{\dag }-\bF^{0\dag}\tilde{\bH}^{\dag })'\bepsilon^{\dag }  \right\|_{\F}=O_{p^{\dag}} \left(\sqrt{T} N^{-\frac{3}{2}\alpha_r}\right) +O_{p^{\dag}}\left( \frac{N^{1-\alpha_r}}{\sqrt{T}} \right)+O_{p^{\dag}}\left( N^{\frac{1}{2}- \alpha_r} \right),\label{eq:F*ep_H}&\\
	&(ii) ~ \left\| \frac{1}{\sqrt{T}} (\hat{\bF}^{\dag }-\bF^{0\dag}\tilde{\bH}_q^{\dag })'\bepsilon^{\dag }  \right\|_{\F}=O_{p^{\dag}} \left(\sqrt{T} N^{-\frac{3}{2}\alpha_r}\right) +O_{p^{\dag}}\left( \frac{N^{1-\alpha_r}}{\sqrt{T}} \right)+O_{p^{\dag}}\left( N^{\frac{1}{2}- \alpha_r} \right).\label{eq:F*ep_H3}&
\end{flalign}
\end{lem}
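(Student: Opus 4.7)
The strategy is to adapt the argument for the non-bootstrap counterpart in \citet[Lemma A.4]{jiang2024Mw} to the bootstrap world. Starting from the bootstrap PC eigenvalue equation $T^{-1}\bX^{\dag}\bX^{\dag\prime}\hat{\bF}^{\dag} = \hat{\bF}^{\dag}\hat{\bLambda}^{\dag}$ with $\bX^{\dag} = \bF^{0\dag}\bB^{0\dag\prime} + \bE^{\dag}$, one obtains the decomposition
\[
\hat{\bF}^{\dag} - \bF^{0\dag}\tilde{\bH}^{\dag}
= T^{-1}\bigl[\bF^{0\dag}\bB^{0\dag\prime}\bE^{\dag\prime} + \bE^{\dag}\bB^{0\dag}\bF^{0\dag\prime} + \bE^{\dag}\bE^{\dag\prime}\bigr]\hat{\bF}^{\dag}\hat{\bLambda}^{\dag -1}
=: \mathcal{R}_1^{\dag} + \mathcal{R}_2^{\dag} + \mathcal{R}_3^{\dag}.
\]
Transposing, right-multiplying by $T^{-1/2}\bepsilon^{\dag}$, and applying the triangle inequality reduces \eqref{eq:F*ep_H} to a termwise Frobenius-norm bound on $T^{-1/2}\mathcal{R}_j^{\dag\prime}\bepsilon^{\dag}$.

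\textbf{Termwise analysis.} For the cleanest term $\mathcal{R}_2^{\dag}$, write $T^{-1/2}\mathcal{R}_2^{\dag\prime}\bepsilon^{\dag} = \hat{\bLambda}^{\dag -1}\bigl(T^{-1}\hat{\bF}^{\dag\prime}\bF^{0\dag}\bigr)\bigl(T^{-1/2}\bB^{0\dag\prime}\bE^{\dag\prime}\bepsilon^{\dag}\bigr)$ and combine $\|\hat{\bLambda}^{\dag -1}\|_2 = O_{p^{\dag}}(N^{-\alpha_r})$, $T^{-1}\hat{\bF}^{\dag\prime}\bF^{0\dag} = O_{p^{\dag}}(1)$ from Lemma \ref{lem:Rotation}(v), and the cross-dependence bound $T^{-1/2}\bN^{-1/2}\bB^{0\dag\prime}\bE^{\dag\prime}\bepsilon^{\dag} = O_{p^{\dag}}(1)$ from Assumption \ref{ass:2errors_b}, producing the $O_{p^{\dag}}(N^{1/2-\alpha_r})$ contribution. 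For $\mathcal{R}_1^{\dag}$, split $\hat{\bF}^{\dag} = \bF^{0\dag}\tilde{\bH}_b^{\dag} + (\hat{\bF}^{\dag} - \bF^{0\dag}\tilde{\bH}_b^{\dag})$: the leading piece is controlled by Assumption \ref{ass:factor and loadings_b}(iii) (whose $\hat{\bz}_t$ contains $\bff_t^{0\dag}$) together with $T^{-1/2}\hat{\bF}^{\prime}\bepsilon^{\dag} = O_{p^{\dag}}(1)$ from Assumption \ref{ass:Aug_errors_b}(ii), while the remainder is absorbed by Lemma \ref{lem:Rotation}(ii). For $\mathcal{R}_3^{\dag}$, the same splitting of $\hat{\bF}^{\dag}$ is applied; the leading piece $T^{-3/2}\hat{\bLambda}^{\dag -1}\tilde{\bH}_b^{\dag\prime}\bF^{0\dag\prime}\bE^{\dag}\bE^{\dag\prime}\bepsilon^{\dag}$ is bounded through its bootstrap second moment, where wild-bootstrap independence between $\{s_{t,i}^{\dag}\}$ and $\{\omega_t^{\dag}\}$ together with $\E^{\dag}[e_{t,i}^{\dag}\epsilon_{s+h}^{\dag}] = 0$ collapses most cross-terms, and what survives is bounded via Assumptions \ref{ass:errors_b} and \ref{ass:Aug_errors_b}(i). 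The remainder piece in $\mathcal{R}_3^{\dag}$ is controlled by $\|\bE^{\dag}\|_2^2 = O_{p^{\dag}}(\max\{N,T\})$ from Assumption \ref{ass:errors_b}(iv) together with Lemma \ref{lem:Rotation}(iv). Summing the three contributions yields exactly the stated rates $\sqrt{T}N^{-3\alpha_r/2}$, $N^{1-\alpha_r}/\sqrt{T}$ and $N^{1/2-\alpha_r}$.

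\textbf{Part (ii) and main obstacle.} For part (ii), the plan is to pass from $\tilde{\bH}^{\dag}$ to $\tilde{\bH}_q^{\dag}$ via
\[
T^{-1/2}(\hat{\bF}^{\dag} - \bF^{0\dag}\tilde{\bH}_q^{\dag})^{\prime}\bepsilon^{\dag}
= T^{-1/2}(\hat{\bF}^{\dag} - \bF^{0\dag}\tilde{\bH}^{\dag})^{\prime}\bepsilon^{\dag}
+ (\tilde{\bH}^{\dag} - \tilde{\bH}_q^{\dag})^{\prime}\,T^{-1/2}\bF^{0\dag\prime}\bepsilon^{\dag}.
\]
The first piece is already bounded by (i), and the second is $O_{p^{\dag}}(\Delta_{NT})\cdot O_{p^{\dag}}(1)$ since both $\tilde{\bH}^{\dag}$ and $\tilde{\bH}_q^{\dag}$ converge to $\bI_r$ at rate $\Delta_{NT}$ by Lemma \ref{lem:Rotation}; a direct check under $\alpha_r > \tfrac{1}{2}$ and $N^{1-\alpha_r}/T \to 0$ shows $\Delta_{NT}$ is dominated by the three rates in (i), so (ii) follows with the same bound. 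The principal technical hurdle is the $\mathcal{R}_3^{\dag}$ piece, since $\bE^{\dag}\bE^{\dag\prime}\bepsilon^{\dag}$ mixes three bootstrap-random objects and no single bootstrap moment assumption bounds it outright; the peeling argument above, which leverages the wild-bootstrap independence structure and then transfers bootstrap norms to the original-sample quantities via Assumption \ref{ass:bootstrap_b}, is the crucial new step relative to the non-bootstrap proof.
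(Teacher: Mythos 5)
Your overall strategy mirrors what the paper's one-line proof (a citation to Lemma A.2 in \citet{jiang2024Mw}) entails: transfer the non-bootstrap argument to the bootstrap world by starting from the PC eigenvalue identity $\hat{\bF}^{\dag}-\bF^{0\dag}\tilde{\bH}^{\dag}=T^{-1}\bigl[\bF^{0\dag}\bB^{0\dag\prime}\bE^{\dag\prime}+\bE^{\dag}\bB^{0\dag}\bF^{0\dag\prime}+\bE^{\dag}\bE^{\dag\prime}\bigr]\hat{\bF}^{\dag}\hat{\bLambda}^{\dag-1}$, bounding each product termwise via the bootstrap analogues of the signal-noise cross-moment bounds (Assumptions $3^\dag$--$6^\dag$) and the rotation-matrix rate lemma (Lemma~\ref{lem:Rotation}), and then passing from $\tilde{\bH}^{\dag}$ to $\tilde{\bH}_q^{\dag}$ by a telescoping identity. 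That is the intended route and your decomposition, your use of Assumption \ref{ass:2errors_b} for the $\bB^{0\dag\prime}\bE^{\dag\prime}\bepsilon^{\dag}$ piece, and your reduction of part (ii) to showing $\Delta_{NT}$ is dominated by the displayed rates are all correct.

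Two things are loose enough to flag. First, your treatment of the $\mathcal{R}_3^{\dag}=T^{-1}\bE^{\dag}\bE^{\dag\prime}\hat{\bF}^{\dag}\hat{\bLambda}^{\dag-1}$ contribution invokes ``wild-bootstrap independence between $\{s_{t,i}^{\dag}\}$ and $\{\omega_t^{\dag}\}$''. The lemma is stated under the high-level Assumptions $3^\dag$--$7^\dag$ with no wild-bootstrap structure imposed, so the proof should not lean on a scheme-specific factorization; the argument must go through the high-level moment bounds alone (the separation into $t=s$ versus $t\neq s$ terms, with Assumption $3^\dag$(ii),(iii) controlling the off-diagonal sums and Assumption $3^\dag$(iv) the spectral norm, is the device used in the non-bootstrap Lemma A.2). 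Relatedly, a crude operator-norm bound $T^{-3/2}\|\hat{\bLambda}^{\dag-1}\|_2\|\bF^{0\dag}\|_2\|\bE^{\dag}\|_2^2\|\bepsilon^{\dag}\|_{\F}$ gives $\sqrt{T}N^{-\alpha_r}$, which is strictly worse than the claimed $\sqrt{T}N^{-3\alpha_r/2}$; the sharper rate requires the finer accounting you gesture at but do not carry out. Second, in part (ii) you write ``under $\alpha_r>\tfrac12$ and $N^{1-\alpha_r}/T\to 0$'', but the lemma assumes only the latter; a direct check shows every term of $\Delta_{NT}$ is already dominated by $N^{1/2-\alpha_r}$ or $N^{1-\alpha_r}/\sqrt{T}$ under $N^{1-\alpha_r}/T\to 0$ alone (using $\alpha_1\leq 1$), so the extra restriction is unnecessary and should be dropped.
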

\begin{proof}[Proof of Lemma \ref{lem:F*epsilon}] The proof follows directly from \citep[Lemma A.2]{jiang2024Mw}.
\end{proof}

\begin{lem} 
\label{lem:bias_Hhat}
Suppose Assumptions \ref{ass:eigen}–\ref{ass:Aug_errors} and \ref{ass:errors_b}--\ref{ass:bootstrap_b} hold. If $\alpha_r>\frac{1}{2}$, $\frac{N^{1-\alpha_r}}{\sqrt{T}} \to  0$, and $\sqrt{T}N^{\frac{1}{2}\alpha_1-\frac{3}{2}\alpha_r} \to c_1 \in [0,\infty)$, then, the following statements hold in probability, as $N, T \to \infty$,
\begin{flalign*}
	&(i) ~  \frac{1}{\sqrt{T}}  \hat{\bF}^{\dag \prime}(\hat{\bF}^{\dag }-\bF^{0\dag}\tilde{\bH}^{\dag })   =c_1   (\bG^\dag+\nu\bD^{\dag-1}\bGamma^\dag\bD^{\dag-1})+o_{p^{\dag}}\left( 1 \right),&\\
	&(ii) ~  \frac{1}{\sqrt{T}}  \bW'(\hat{\bF}^{\dag }-\bF^{0\dag}\tilde{\bH}^{\dag })   =c_1  \hat{\bSigma}_{\bW \hat{\bF} } \bG^\dag+o_{p^{\dag}}\left( 1 \right),&
\end{flalign*}
where $\bD^{\dag}=\bN^{-1}\hat{\bLambda}^{\dag}$, $c_1 \mathbf{G}^\dag=\lim _{N, T \rightarrow \infty} \sqrt{T} \mathbf{N}^{\frac{1}{2}} \boldsymbol{\Gamma}^\dag \mathbf{D}^{\dag-2} \mathbf{N}^{-\frac{3}{2}}$, $\nu=\lim_{N \rightarrow \infty} N^{-\frac{1}{2}\left(\alpha_1-\alpha_r\right)}$, and $\hat{\bSigma}_{\bW \hat{\bF}}=\frac{1}{T} \bW'\hat{\bF}$. 
\end{lem}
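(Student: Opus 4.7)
The plan is to mirror the proof of the non-bootstrap counterpart in \citet[Lemma~A.3]{jiang2024Mw}, carrying out every step conditionally on the original sample and invoking the bootstrap moment assumptions $3^{\dag}$--$6^{\dag}$ in place of their original-sample analogues.

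First I would derive the bootstrap version of Bai's expansion. Starting from the eigenvalue identity $T^{-1}\bX^{\dag}\bX^{\dag\prime}\hat{\bF}^{\dag} = \hat{\bF}^{\dag}\hat{\bLambda}^{\dag}$, substituting $\bX^{\dag} = \bF^{0\dag}\bB^{0\dag\prime} + \bE^{\dag}$, and using the definition $\bF^{0\dag}\tilde{\bH}^{\dag} = T^{-1}\bF^{0\dag}\bB^{0\dag\prime}\bB^{0\dag}\bF^{0\dag\prime}\hat{\bF}^{\dag}(\hat{\bLambda}^{\dag})^{-1}$, one obtains
\[
\hat{\bF}^{\dag} - \bF^{0\dag}\tilde{\bH}^{\dag} = \frac{1}{T}\!\left[\bF^{0\dag}\bB^{0\dag\prime}\bE^{\dag\prime}\hat{\bF}^{\dag} + \bE^{\dag}\bB^{0\dag}\bF^{0\dag\prime}\hat{\bF}^{\dag} + \bE^{\dag}\bE^{\dag\prime}\hat{\bF}^{\dag}\right](\hat{\bLambda}^{\dag})^{-1}.
\]
For part~(i) I premultiply by $T^{-1/2}\hat{\bF}^{\dag\prime}$ and obtain three terms. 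Terms I and II are negligible: after pulling out $T^{-1}\hat{\bF}^{\dag\prime}\bF^{0\dag} = \tilde{\bQ}^{\dag\prime} = \bI_r + O_{p^{\dag}}(\Delta_{NT})$ by Lemma~\ref{lem:Rotation}(v) and applying Lemma~\ref{lem:Rotation}(ii) to control $T^{-1}\bB^{0\dag\prime}\bE^{\dag\prime}\hat{\bF}^{\dag}$, the stated rate conditions $\alpha_r>\frac{1}{2}$, $N^{1-\alpha_r}/\sqrt{T}\to 0$, and $\sqrt{T}N^{\frac{1}{2}\alpha_1-\frac{3}{2}\alpha_r}\to c_1$ make both $o_{p^{\dag}}(1)$. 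The dominant term is $\mathrm{III} = T^{-3/2}\hat{\bF}^{\dag\prime}\bE^{\dag}\bE^{\dag\prime}\hat{\bF}^{\dag}(\hat{\bLambda}^{\dag})^{-1}$, whose outer copies of $\hat{\bF}^{\dag}$ I replace by $\bF^{0\dag}\tilde{\bH}^{\dag}$, absorbing the residual via Lemma~\ref{lem:Rotation}(i). The resulting quadratic form, after rewriting $\bF^{0\dag\prime}\bE^{\dag}\bE^{\dag\prime}\bF^{0\dag}$ in terms of the loading-weighted structure and invoking Assumption~$4^{\dag}$(iv), concentrates around $\bN^{1/2}\bGamma^{\dag}\bN^{1/2}$; combined with $(\hat{\bLambda}^{\dag})^{-1} = \bD^{\dag-1}\bN^{-1}$ and the scaling by $\sqrt{T}$, the diagonal entries of the resulting matrix converge to $c_1 \bG^{\dag}$, while the off-diagonal entries, multiplied by $\nu = \lim N^{-(\alpha_1-\alpha_r)/2}$, yield the additional $c_1\nu\bD^{\dag-1}\bGamma^{\dag}\bD^{\dag-1}$ contribution that survives only when $\alpha_1 = \alpha_r$.

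For part~(ii), the same expansion is premultiplied by $T^{-1/2}\bW^{\prime}$. The first term factors as $(T^{-1}\bW^{\prime}\bF^{0\dag})\bB^{0\dag\prime}(T^{-1/2}\bE^{\dag\prime}\hat{\bF}^{\dag})(\hat{\bLambda}^{\dag})^{-1}$, and iterating the $\hat{\bF}^{\dag}$-expansion inside $T^{-1}\bE^{\dag\prime}\hat{\bF}^{\dag}$ isolates the dominant $\bE^{\dag}\bE^{\dag\prime}$ contribution; together with $T^{-1}\bW^{\prime}\bF^{0\dag} = T^{-1}\bW^{\prime}\hat{\bF} = \hat{\bSigma}_{\bW\hat{\bF}}$ (since $\bF^{0\dag}:=\hat{\bF}$), this assembles into $c_1 \hat{\bSigma}_{\bW\hat{\bF}}\bG^{\dag}$. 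The remaining two terms vanish: the $\bE^{\dag}\bB^{0\dag}\bF^{0\dag\prime}\hat{\bF}^{\dag}$ piece is handled by bootstrap Assumptions~$4^{\dag}$(iii)--(iv), and the direct $\bW^{\prime}\bE^{\dag}\bE^{\dag\prime}\hat{\bF}^{\dag}$ piece is of lower order because $\bW$ does not reproduce the $\bF^{0\dag}$-loading structure responsible for the $\nu$-correction observed in part~(i); hence no analogue of $\nu\bD^{\dag-1}\bGamma^{\dag}\bD^{\dag-1}$ appears here.

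The main obstacle will be the careful diagonal-versus-off-diagonal rate accounting inside Term~III. Because $(\hat{\bLambda}^{\dag})^{-1} = \bD^{\dag-1}\bN^{-1}$ has diagonal entries of different orders $N^{-\alpha_k}$, the $(k,l)$ entries of the scaled quadratic form behave as $\sqrt{T}N^{\alpha_k/2}\Gamma^{\dag}_{kl}d_l^{\dag-2}N^{-3\alpha_l/2}$; these are bounded only under the imposed rate condition, and the critical coincidence between diagonal and off-diagonal scalings occurs exactly when $\alpha_1 = \alpha_r$, producing the $\nu$ factor. A secondary delicate point is the asymmetry $\bB^{0\dag\prime}\bB^{0\dag} = \hat{\bLambda}$ (original eigenvalues) versus $\hat{\bB}^{\dag\prime}\hat{\bB}^{\dag} = \hat{\bLambda}^{\dag}$ (bootstrap eigenvalues) that appears inside $\tilde{\bH}^{\dag}$; the ratio $\hat{\bLambda}\hat{\bLambda}^{\dag-1} = \bI_r + o_{p^{\dag}}(1)$ must be tracked when combining the expansion with the Lemma~\ref{lem:Rotation} bounds, and Assumption~$7^{\dag}$ ensures that the bootstrap limits $\bGamma^{\dag}, \bD^{\dag}$ coincide with their population counterparts, so that $\bG^{\dag}$ and $\bG$ share the same probability limit.
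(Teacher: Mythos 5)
Your expansion of $\hat{\bF}^{\dag}-\bF^{0\dag}\tilde{\bH}^{\dag}$ is correct, but you have the identification of dominant versus negligible terms exactly backwards, and this is not a cosmetic issue. Your Term~III, namely $\bA_1 := T^{-3/2}\hat{\bF}^{\dag\prime}\bE^{\dag}\bE^{\dag\prime}\hat{\bF}^{\dag}\hat{\bLambda}^{\dag-1}$, is the \emph{negligible} one: replacing $\hat{\bF}^{\dag}$ by $\bF^{0\dag}$, the quadratic form $\bF^{0\dag\prime}\bE^{\dag}\bE^{\dag\prime}\bF^{0\dag}$ is of order $TN$ and $\hat{\bLambda}^{\dag-1}=O(\bN^{-1})$, so the leading entry of $\bA_1$ is $O_{p^{\dag}}(N^{1-\alpha_r}/\sqrt{T})$; the paper shows $\|\bA_1\|_{\F}=O_{p^{\dag}}(N^{1-\alpha_r}/\sqrt{T}+\sqrt{T}N^{-\frac{3}{2}\alpha_r}+N^{\frac{1}{2}-\alpha_r})$, and every one of these vanishes under the stated rate conditions — indeed, $N^{1-\alpha_r}/\sqrt{T}\to 0$ is precisely the hypothesis that kills this term. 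The dominant contribution comes instead from your Terms~I and~II (the cross terms $\bA_2$ and $\bA_3$ involving $\bE^{\dag}\bB^{0\dag}$ and $\bB^{0\dag\prime}\bE^{\dag\prime}$). For example $\bA_3 = T^{-1/2}\tilde{\bQ}^{\dag}\bB^{0\dag\prime}\bE^{\dag\prime}\hat{\bF}^{\dag}\hat{\bLambda}^{\dag-1}$: substituting the expansion of $\hat{\bF}^{\dag}$ once more, the piece $T^{-1/2}\bB^{0\dag\prime}\bE^{\dag\prime}(\hat{\bF}^{\dag}-\bF^{0\dag}\tilde{\bH}_b^{\dag})\hat{\bLambda}^{\dag-1}$ is, via Lemma~\ref{lem:Rotation}(ii), of order $\sqrt{T}N^{\frac{1}{2}\alpha_1-\frac{3}{2}\alpha_r}$, which converges to $c_1>0$, not to zero. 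So your Terms~I/II survive while your Term~III dies.

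This propagates to the claimed mechanism for $\bGamma^{\dag}$ and $\nu$. You assert that $\bF^{0\dag\prime}\bE^{\dag}\bE^{\dag\prime}\bF^{0\dag}$ (a time-weighted, $T\times T$-contracted quadratic form) ``concentrates around $\bN^{1/2}\bGamma^{\dag}\bN^{1/2}$'' via Assumption~$4^{\dag}$(iv); but that assumption and the definition of $\bGamma^{\dag}$ concern the loading-weighted form $T^{-1}\bN^{-1/2}\bB^{0\dag\prime}\bE^{\dag\prime}\bE^{\dag}\bB^{0\dag}\bN^{-1/2}$, which is an entirely different object and is exactly what arises after the second iteration of the expansion inside $\bA_2,\bA_3$. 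Moreover, the split into $\bG^{\dag}$ and $\nu\bD^{\dag-1}\bGamma^{\dag}\bD^{\dag-1}$ is not a diagonal-versus-off-diagonal decomposition of a single matrix as you describe: the two pieces come from two distinct dominant expressions, $\sqrt{T}\hat{\bLambda}^{\dag-1}\tilde{\bQ}^{\dag}\big(T^{-1}\bB^{0\dag\prime}\bE^{\dag\prime}\bE^{\dag}\bB^{0\dag}\big)\tilde{\bQ}^{\dag\prime}\hat{\bLambda}^{\dag-1}$ and $\sqrt{T}\tilde{\bQ}^{\dag}\big(T^{-1}\bB^{0\dag\prime}\bE^{\dag\prime}\bE^{\dag}\bB^{0\dag}\big)\tilde{\bQ}^{\dag\prime}\hat{\bLambda}^{\dag-2}$, whose different placements of $\hat{\bLambda}^{\dag-1}$ give the different $\bN$-powers that generate $\nu\bD^{\dag-1}\bGamma^{\dag}\bD^{\dag-1}$ and $\bG^{\dag}$ respectively. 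Your part~(ii) reasoning is closer to the paper's, but part~(i) would not produce the stated limit if carried out as you wrote it.
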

\begin{rem}
The expressions $c_1 \mathbf{G}^\dag=\lim _{N, T \rightarrow \infty} \sqrt{T} \mathbf{N}^{\frac{1}{2}} \mathbf{\Gamma}^\dag \mathbf{D}^{\dag-2} \mathbf{N}^{-\frac{3}{2}}$ suggests a complicated asymptotic bias structure, depending on the structure of $\left(\alpha_1, \ldots, \alpha_r\right)$. Suppose that the conditions for the results in Theorem \ref{thm:bias_Hhat} are satisfied and $c_1 \in(0, \infty)$. Consider an $r \times 1$ vector $\boldsymbol{\alpha}=\left(\alpha_1, \alpha_2, \ldots, \alpha_r\right)^{\prime}$. Let an $r \times 1$ vector of a binary variable be $\mathbf{e}_{\alpha_1}$, which replaces elements in $\boldsymbol{\alpha}$ with 1 if they are $\alpha_1$ and 0 otherwise. Similarly, define an $r \times 1$ vector of a binary variable $\mathbf{e}_{\alpha_r}$ for $\alpha_r$. Moreover, $c_1 \mathbf{G}^\dag=c_1 \mathbf{\Gamma}^\dag \mathbf{D}^{\dag-2}$ if $\alpha_1=\alpha_r$ and $c_1 \mathbf{G}=c_1\left(\mathbf{e}_{\alpha_1} \mathbf{e}_{\alpha_r}^{\prime}\right) \odot \mathbf{\Gamma}^\dag \mathbf{D}^{\dag-2}$ if $\alpha_1>\alpha_r$.
\end{rem}
\begin{proof}[Proof of Lemma \ref{lem:bias_Hhat}] (i) The proof follows directly from \citep[Lemma A.3]{jiang2024Mw}. We can rewrite the expression as follows:
\begin{align*}
	& \frac{1}{\sqrt{T}} \hat{\bF} ^{\dag \prime}(\hat{\bF}^{\dag }-\bF^{0\dag}\tilde{\bH}^{\dag }) \\
	& =\frac{1}{\sqrt{T}} \hat{\bF}^{\dag \prime}\left( \frac{1}{T} \bE^{\dag }\bE^{\dag \prime}\hat{\bF}^{\dag }\hat{\bLambda}^{{\dag }-1}+\frac{1}{T} \bE^{\dag }\bB^{0\dag}
	\bF^{0\dag\prime}\hat{\bF}^{\dag } \hat{\bLambda}^{^{\dag }-1} + \frac{1}{T}  \bF^{0\dag}
	\bB^{0\dag\prime}\bE^{\dag \prime}\hat{\bF}^{\dag } \hat{\bLambda}^{{\dag }-1}\right)   \\
	&=\bA_1+\bA_2+\bA_3.
\end{align*}
The first term is bounded as
\begin{align*}
	\left\| \bA_1 \right\|_{\F} 
	& =   O_{p^{\dag}}   \left( \frac{N^{1- \alpha_r}}{\sqrt{T}}+ \sqrt{T}N^{-\frac{3}{2}\alpha_r}+N^{\frac{1}{2}-\alpha_r}  \right) ,
\end{align*}
in probability. We then expand $\hat{\bF}^{\dag } $ as $\hat{\bF}^{\dag } -\bF^{0\dag }\tilde{\bH}^{\dag }_b+ \bF^{0\dag }\tilde{\bH}^{\dag }_b$ in terms $\bA_2$ and $\bA_3$. The dominants arise from the following two terms:
\begin{align*}
	& \left\|\sqrt{T} \bN^{\frac{1}{2}} \hat{\bLambda}^{{\dag}-1} \left(\bN^{-\frac{1}{2}} \frac{\hat{\bF}^{\dag\prime} \bF^{0\dag}}{T} \bN^{\frac{1}{2}}\right) \frac{\bN^{-\frac{1}{2}} \bB^{0\dag\prime} \bE^{\dag\prime} \bE^{\dag} \bB^{0\dag} \bN^{-\frac{1}{2}}}{T}\left(\bN^{\frac{1}{2}} \frac{\bF^{0\dag\prime} \hat{\bF}^{\dag}}{T} \bN^{-\frac{1}{2}}\right) \bN^{\frac{1}{2}} \hat{\bLambda}^{^{\dag}-1} \right\|_{\F} \\
	& = O_{p^{\dag}} \left(  \sqrt{T}N^{ - \alpha_r}  \right),
	\\
	&\left\|\sqrt{T} \bN^{\frac{1}{2}} \left(\bN^{-\frac{1}{2}} \frac{\hat{\bF}^{\dag\prime} \bF^{0\dag}}{T} \bN^{\frac{1}{2}}\right) \frac{\bN^{-\frac{1}{2}} \bB^{0\dag\prime} \bE^{\dag\prime} \bE^{\dag} \bB^{0\dag} \bN^{-\frac{1}{2}}}{T}\left(\bN^{\frac{1}{2}} \frac{\bF^{0\dag \prime} \hat{\bF}^{\dag}}{T} \bN^{-\frac{1}{2}}\right) \bN^{\frac{1}{2}} \hat{\bLambda}^{^{\dag}-2}\right\|_{\F} \\
	&= O_{p^{\dag}} \left(  \sqrt{T}N^{\frac{1}{2}\alpha_1 -\frac{3}{2} \alpha_r}  \right),
\end{align*}
in probability.
Furthermore, \citep[proof of Lemma B.4]{jiang2023revisiting} together with Assumption \ref{ass:factor and loadings_b}(iv) implies:
\begin{align*}
	&\hat{\bLambda}^{\dag}  -\hat{\bLambda} = O_{p^{\dag} }(\Delta_{NT}),  \quad  \frac{\bN^{-\frac{1}{2}} \bB^{0\dag\prime} \bE^{\dag\prime} \bE^{\dag} \bB^{0\dag} \bN^{-\frac{1}{2}}}{T} -\bGamma^{\dag}=o_{p^{\dag} }(1), \text{ in probability}.
\end{align*}
Case 1: $\alpha_1=\alpha_r$. Under the assumptions in Theorem \ref{thm:bias_Hhat},  the two dominant terms are of the same order $O_{p^{\dag}}( \sqrt{T}N^{  -\alpha_r})$, in probability. Assuming $\sqrt{T}/N^{ \alpha_r} \to c_1 \in[0,\infty)$, we obtain
\begin{align*}
	& \sqrt{T} \bN^{-\frac{1}{2}} \hat{\bLambda}^{\dag-1} \bN \left(\bN^{-\frac{1}{2}} \frac{\hat{\bF}^{\dag\prime} \bF^{0\dag}}{T} \bN^{\frac{1}{2}}\right) \frac{\bN^{-\frac{1}{2}} \bB^{0\dag\prime} \bE^{\dag\prime} \bE^{\dag} \bB^{0\dag} \bN^{-\frac{1}{2}}}{T}\left(\bN^{\frac{1}{2}} \frac{\bF^{0\dag\prime} \hat{\bF}^{\dag}}{T} \bN^{-\frac{1}{2}}\right) \bN \hat{\bLambda}^{\dag-1}\bN^{-\frac{1}{2}}
	\\
	& \quad 
	= c_1  \bD^{\dag-1} \bGamma^{\dag} \bD^{\dag-1} +o_{p^{\dag}}(1),\\
	& \sqrt{T} \bN^{\frac{1}{2}}\left(\bN^{-\frac{1}{2}} \frac{\hat{\bF}^{\dag\prime} \bF^{0\dag}}{T} \bN^{\frac{1}{2}}\right) \frac{\bN^{-\frac{1}{2}} \bB^{0\dag\prime} \bE^{\dag\prime} \bE^{\dag} \bB^{0\dag} \bN^{-\frac{1}{2}}}{T}\left(\bN^{\frac{1}{2}} \frac{\bF^{0\dag\prime} \hat{\bF}^{\dag}}{T} \bN^{-\frac{1}{2}}\right) \bN^{2} \hat{\bLambda}^{\dag-2}\bN^{-\frac{3}{2}}\\
	& \quad 
	= c_1 \bG^{\dag}   +o_{p^{\dag}}(1),
\end{align*}
in probability, where $c_1 \bD^{\dag-1} \bGamma^\dag \bD^{\dag-1}=\lim _{N, T \rightarrow \infty} \sqrt{T} \bN^{-\frac{1}{2}} \bD^{\dag-1} \bGamma^\dag \bD^{\dag-1} \bN^{-\frac{1}{2}}=c_1 \bD^{\dag-1} \bGamma^\dag \bD^{\dag-1}$ and  $c_1 \bG^\dag=\lim _{N, T \rightarrow \infty} \sqrt{T} \bN^{\frac{1}{2}} \bGamma^\dag \bD^{\dag-2} \bN^{-\frac{3}{2}}= c_1 \bGamma^\dag \bD^{\dag-2}$.\\
Case 2: $\alpha_1>\alpha_r$. In this case, the first term is no longer larger than the second one. 
If $\sqrt{T}N^{\frac{1}{2}\alpha_1-\frac{3}{2}\alpha_r} \to c_1 \in[0,\infty)$, we have
\[
\frac{1}{\sqrt{T}}  \hat{\bF}^{{\dag}\prime}(\hat{\bF}^{\dag}-\bF^{0\dag}\tilde{\bH}^{\dag})    =c_1 \bG^\dag+o_{p^{\dag}}\left( 1 \right),
\]
in probability, where $\bG^\dag$ is defined as $c_1 \mathbf{G}^\dag=c_1\left(\mathbf{e}_{\alpha_1} \mathbf{e}_{\alpha_r}^{\prime}\right) \odot \mathbf{\Gamma}^\dag \mathbf{D}^{\dag-2}$.

(ii) From \cite[Lemma A.3]{jiang2024Mw}, the dominant term in $\frac{1}{\sqrt{T}}  \bW^{\prime}(\hat{\bF}^{\dag}-\bF^{0\dag}\tilde{\bH}^{\dag})$ is given by
\begin{align*}
	& \sqrt{T} \left(  \frac{\bW' \bF^{0\dag}}{T}  \right)\bN^{\frac{1}{2}} \frac{\bN^{-\frac{1}{2}} \bB^{0\dag\prime} \bE^{\dag\prime} \bE^{\dag} \bB^{0\dag} \bN^{-\frac{1}{2}}}{T}\left(\bN^{\frac{1}{2}} \frac{\bF^{0\dag\prime} \hat{\bF}^{\dag}}{T}\bN^{-\frac{1}{2}}\right)   \bN^{2} \hat{\bLambda}^{{\dag}-2}\bN^{-\frac{3}{2}}\\
	& \quad = c_1  \hat{\bSigma}_{\bW \hat{\bF} }\bG^\dag+o_{p^{\dag}}(1)
\end{align*}
in probability, provided $\sqrt{T}N^{\frac{1}{2}\alpha_1-\frac{3}{2}\alpha_r} \to c_1 \in[0,\infty)$.
\end{proof}

\begin{proof}[Proof of Theorem \ref{thm:bias_Hhat}] Replacing $\hat{\mathbf{F}}^\dag$ with $\mathbf{F}^{0\dag}$, we have
\begin{align*}
	\frac{1}{T}  \hat\bZ^{\dag\prime}\hat\bZ^{\dag}   & =
	\frac{1}{T}\bZ^{0\dag\prime}\bZ^{0\dag} 
	+
	O_{p^{\dag}}\left(\Delta_{NT}\right)+O_{p^{\dag}}\left(\frac{1}{\sqrt{T}}\right)
\end{align*}
in probability, because Lemmas \ref{lem:Rotation}(iv)(v) and \ref{lem:bias_Hhat}(ii) imply
\begin{align*}
	& \frac{1}{T} \hat{\bF}^{\dag\prime}\bF^{0\dag}-\bI_r=O_{p^{\dag}}\left( \Delta_{NT} \right), \quad 
	\frac{1}{T} \bW'(\hat\bF^{\dag}-\bF^{0\dag})=O_{p^{\dag}}\left( \Delta_{NT} \right)+O_{p^{\dag}}\left(\frac{1}{\sqrt{T}}\right),
\end{align*}
in probability. As shown by \citep[Proof of Theorem 1]{jiang2024Mw} that $\frac{1}{T}  \hat\bZ^{\prime}\hat\bZ    =
\frac{1}{T}\bZ^{0\prime}\bZ^{0} 
+o_p(1)$, the analogous result within bootstrap holds, and therefore
$ \frac{1}{T}  \hat\bZ^{\dag\prime}\hat\bZ^{\dag}
=
\frac{1}{T}\bZ^{0\dag\prime}\bZ^{0\dag} 
+o_{p^{\dag}}(1)$ in probability. In addition, $ \plim \bN^{-1}\hat{\bLambda} = \bD $ and $ \plim \bGamma^{\dag}=\bGamma$ imply that
$c_1  \bG^{\dag}    \CP c_1  \bG $ and $  \bD^{\dag-1} \bGamma^{\dag} \bD^{\dag-1} \CP  \bD^{ -1} \bGamma  \bD^{ -1}$. Furthermore, $\plim \bgamma^{0\dag} = \bH_0^{-1} \bgamma^*$ and $\plim \hat{\bSigma}_{\bW \hat{\bF} } =\bSigma_{\bW \bF^0}   $.     
Combining these results with Lemmas \ref{lem:F*epsilon} and \ref{lem:bias_Hhat},
\begin{align*}
	& \sqrt{T}(\hat\bdelta^{\dag} - \bdelta_{\hat{\bH}^{\dag}})  \\
	&= (T^{-1}\hat\bZ^{\dag\prime}\hat\bZ^{\dag})^{-1} T^{-\frac{1}{2}}\hat\bZ^{\dag\prime}\bepsilon^{\dag}
	- (T^{-1}\hat\bZ^{\dag\prime}\hat\bZ^{\dag})^{-1} T^{-\frac{1}{2}}\hat\bZ^{\dag\prime}(\hat\bF^{\dag}-\bF^{0\dag}\tilde{\bH}^{\dag})\tilde{\bH}^{{\dag}-1}\bgamma^{0\dag}  \\
	&  = (T^{-1} \bZ^{0\dag\prime} \bZ^{0\dag})^{-1} T^{-\frac{1}{2}}  \bZ^{0\dag\prime}\bepsilon^{\dag}
	-(T^{-1} \bZ^{0\dag\prime} \bZ^{0\dag})^{-1} T^{-\frac{1}{2}}\hat\bZ^{\dag\prime}(\hat\bF^{\dag}-\bF^{0\dag}\tilde{\bH}^{\dag})\tilde{\bH}^{{\dag}-1}\bgamma^{0\dag}  \\
	& 
	\quad \quad +
	O_{p^{\dag}}(\Delta_{NT})+
	O_{p^{\dag}} \left(\sqrt{T} N^{-\frac{3}{2}\alpha_r}\right)+O_{p^{\dag}}\left( \frac{N^{1-\alpha_r}}{\sqrt{T}} \right)+O_{p^{\dag}}\left( N^{\frac{1}{2}- \alpha_r} \right)\\
	& \CDb
	N\left(-c_1 \bkappa_{\bdelta^*}, \bSigma_{\bdelta}\right)
\end{align*}
in probability, where $\bSigma_{\bdelta}= \bSigma_{\bZ^0 \bZ^0}^{-1} \bSigma_{\bZ^0 \bepsilon} \bSigma_{\bZ^0 \bZ^0}^{-1}$ and
\begin{align*}
	\bkappa_{\bdelta^*} = \bSigma_{\bZ^0 \bZ^0}  ^{-1}\binom{\bG+\nu \bD^{-1}\bGamma\bD^{-1}}{\bSigma_{\bW \bF^0}  {\bG}}\bH_0^{-1} \bgamma^*. 
\end{align*} 
\end{proof}

\begin{lem} 
\label{lem:bias_Hhat3}
Suppose Assumptions \ref{ass:eigen}–\ref{ass:Aug_errors} and \ref{ass:errors_b}--\ref{ass:bootstrap_b} hold. If $\alpha_r>\frac{1}{2}$, $\frac{N^{1-\alpha_r}}{\sqrt{T}} \to  0$, and $\sqrt{T}N^{-\alpha_r} \to c_2 \in [0,\infty)$, then, the following statements hold in probability, as $N, T \to \infty$,
\begin{align*}
	&   \frac{1}{\sqrt{T}}  \bW'(\hat{\bF}^{\dag}-\bF^{0\dag}\tilde{\bH}_q^{\dag})   = c_2  \hat{\bSigma}_{\bW \hat{\bF} } \bar{\bG}^\dag+o_{p^{\dag}}(1),
\end{align*}
where $c_2 \bar{\bG}^\dag = \lim_{N, T \rightarrow \infty}\sqrt{T}\bN^{-\frac{1}{2}} \bD^{\dag-1}\bGamma^\dag  \bD^{\dag-1}\bN^{-\frac{1}{2}} $. 
If $\alpha_1=\alpha_r$, then $c_1=c_2$ and $c_2\bar{\bG}^\dag=c_2\bD^{\dag-1}\bGamma^\dag  \bD^{\dag-1}$. 
If $\alpha_1 > \alpha_r$, then   
$c_2\bar{\bG}^\dag = c_2(\be_{\alpha_r} \be_{\alpha_r}')
\odot\bD^{\dag-1}\bGamma^\dag  \bD^{\dag-1}$ .

\end{lem}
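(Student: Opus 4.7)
The plan is to mirror the strategy used in the proof of Lemma \ref{lem:bias_Hhat}(ii), but with $\tilde{\bH}_q^{\dag}=(T^{-1}\hat{\bF}^{\dag\prime}\bF^{0\dag})^{-1}$ in place of $\tilde{\bH}^{\dag}$. The crucial structural observation is that, by definition of $\tilde{\bH}_q^{\dag}$, the identity $T^{-1}\hat{\bF}^{\dag\prime}\bF^{0\dag}\tilde{\bH}_q^{\dag}=\bI_r=T^{-1}\hat{\bF}^{\dag\prime}\hat{\bF}^{\dag}$ holds, so pre-multiplying the difference $\hat{\bF}^{\dag}-\bF^{0\dag}\tilde{\bH}_q^{\dag}$ by $\hat{\bF}^{\dag\prime}/T$ vanishes identically. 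This is precisely what removes one of the two ``dominant'' terms isolated in Lemma \ref{lem:bias_Hhat}(i), reducing the order from $\sqrt{T}N^{\frac{1}{2}\alpha_1-\frac{3}{2}\alpha_r}$ to $\sqrt{T}N^{-\alpha_r}$.

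First I would use the bootstrap PC eigenvalue equation $\hat{\bF}^{\dag}\hat{\bLambda}^{\dag}=T^{-1}\bX^{\dag}\bX^{\dag\prime}\hat{\bF}^{\dag}$ with $\bX^{\dag}=\bF^{0\dag}\bB^{0\dag\prime}+\bE^{\dag}$ and the identity $\bF^{0\dag}\tilde{\bH}_q^{\dag}=\hat{\bF}^{\dag}-[\hat{\bF}^{\dag}-\bF^{0\dag}\tilde{\bH}_q^{\dag}]$ to express
\begin{align*}
\hat{\bF}^{\dag}-\bF^{0\dag}\tilde{\bH}_q^{\dag}
&=\Bigl(\tfrac{1}{T}\bE^{\dag}\bE^{\dag\prime}\hat{\bF}^{\dag}+\tfrac{1}{T}\bE^{\dag}\bB^{0\dag}\bF^{0\dag\prime}\hat{\bF}^{\dag}+\tfrac{1}{T}\bF^{0\dag}\bB^{0\dag\prime}\bE^{\dag\prime}\hat{\bF}^{\dag}\Bigr)\hat{\bLambda}^{\dag-1}\\
&\quad+\bF^{0\dag}\bigl(\tilde{\bH}_b^{\dag}-\tilde{\bH}_q^{\dag}\bigr),
\end{align*}
then pre-multiply by $\bW'/\sqrt{T}$. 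Terms involving only $\bE^{\dag}$ together with $T^{-1}\bW'\bE^{\dag}$ or $T^{-1}\bW'\bF^{0\dag}$ are controlled by Lemma \ref{lem:Rotation} and the bootstrap moment bounds in Assumptions \ref{ass:errors_b}--\ref{ass:factor and loadings_b}; under $N^{1-\alpha_r}/\sqrt{T}\to 0$ and $\alpha_r>1/2$ these contributions are $o_{p^{\dag}}(1)$.

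The surviving dominant contribution is
\begin{align*}
\sqrt{T}\,\frac{\bW'\bF^{0\dag}}{T}\bN^{-\frac{1}{2}}\cdot\frac{\bN^{-\frac{1}{2}}\bB^{0\dag\prime}\bE^{\dag\prime}\bE^{\dag}\bB^{0\dag}\bN^{-\frac{1}{2}}}{T}\cdot\bigl(\bN^{\frac{1}{2}}T^{-1}\bF^{0\dag\prime}\hat{\bF}^{\dag}\bN^{-\frac{1}{2}}\bigr)\bN\hat{\bLambda}^{\dag-2}\bN^{-\frac{1}{2}},
\end{align*}
which is the analogue of the key term in the proof of Lemma \ref{lem:bias_Hhat}(ii) but with the outer $\bN^{\frac{1}{2}}$ replaced by $\bN^{-\frac{1}{2}}$, reflecting the cancellation induced by $\tilde{\bH}_q^{\dag}$. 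Using $\bN^{-\frac{1}{2}}\bB^{0\dag\prime}\bE^{\dag\prime}\bE^{\dag}\bB^{0\dag}\bN^{-\frac{1}{2}}/T-\bGamma^{\dag}=o_{p^{\dag}}(1)$ from Assumption \ref{ass:factor and loadings_b}(iv), $\bN^{-1}\hat{\bLambda}^{\dag}\CP\bD$, $T^{-1}\bF^{0\dag\prime}\hat{\bF}^{\dag}\CP\bI_r$ from Lemma \ref{lem:Rotation}(v), and $T^{-1}\bW'\bF^{0\dag}=\hat{\bSigma}_{\bW\hat{\bF}}+o_{p^{\dag}}(1)$, this term reduces under $\sqrt{T}N^{-\alpha_r}\to c_2\in[0,\infty)$ to $c_2\hat{\bSigma}_{\bW\hat{\bF}}\bar{\bG}^{\dag}+o_{p^{\dag}}(1)$ with $c_2\bar{\bG}^{\dag}=\lim\sqrt{T}\bN^{-\frac{1}{2}}\bD^{\dag-1}\bGamma^{\dag}\bD^{\dag-1}\bN^{-\frac{1}{2}}$.

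The two subcases then follow by inspection of this limit: when $\alpha_1=\alpha_r$, the $\bN^{-\frac{1}{2}}$ scalings collapse so $c_1=c_2$ and $\bar{\bG}^{\dag}=\bD^{\dag-1}\bGamma^{\dag}\bD^{\dag-1}$; when $\alpha_1>\alpha_r$, only the $(k,\ell)$ entries with $\alpha_k=\alpha_\ell=\alpha_r$ balance in the limit, producing the Hadamard mask $(\be_{\alpha_r}\be_{\alpha_r}')\odot(\cdot)$. The main obstacle I anticipate is careful book-keeping of the $\bN$-scalings in this Hadamard expansion, since the asymmetry present in the $\hat{\bH}$ case (whose mask is $\be_{\alpha_1}\be_{\alpha_r}'$) is replaced here by a symmetric mask, and verifying that the terms I dismissed as $o_{p^{\dag}}(1)$ are in fact negligible at each entry requires tracking several rates simultaneously. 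Beyond this, the argument is a direct bootstrap transcription of Lemma A.4 in \cite{jiang2024Mw} via Lemmas \ref{lem:Rotation}–\ref{lem:bias_Hhat} and Assumptions \ref{ass:errors_b}–\ref{ass:bootstrap_b}.
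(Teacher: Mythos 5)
Your overall plan mirrors the paper's (transfer Lemma A.4 of \citet{jiang2024Mw} to the bootstrap via the eigenvalue-equation expansion), but the dominant term you isolate is wrong and in fact $o_{p^{\dag}}(1)$ under the stated rates, so the proof does not close. You obtain your term by taking the $\hat{\bH}$-case dominant term in Lemma~\ref{lem:bias_Hhat}(ii) and replacing the outer $\bN^{\frac{1}{2}}$ by $\bN^{-\frac{1}{2}}$. Since $\bD^{\dag}=\bN^{-1}\hat{\bLambda}^{\dag}$ exactly, the trailing factor you kept is $\bN\hat{\bLambda}^{\dag-2}\bN^{-\frac{1}{2}}=\bN^{-\frac{3}{2}}\bD^{\dag-2}$, so (after the $\tilde{\bQ}$ normalizations collapse to $\bI_r$) your proposed term equals $\sqrt{T}\hat{\bSigma}_{\bW\hat{\bF}}\bN^{-\frac{1}{2}}\bGamma^{\dag}\bN^{-\frac{3}{2}}\bD^{\dag-2}+o_{p^{\dag}}(1)$. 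Its $(k,\ell)$ entry scales as $\sqrt{T}N^{-\frac{\alpha_k}{2}-\frac{3\alpha_\ell}{2}}$, and even the most favorable entries ($\alpha_k=\alpha_\ell=\alpha_r$) have order $\sqrt{T}N^{-2\alpha_r}=(\sqrt{T}N^{-\alpha_r})N^{-\alpha_r}\to c_2\cdot 0 = 0$. Your expression therefore vanishes; it does not converge to $c_2\hat{\bSigma}_{\bW\hat{\bF}}\bar{\bG}^{\dag}$, nor does its scaling $\bN^{-\frac{1}{2}}\bGamma^{\dag}\bD^{\dag-2}\bN^{-\frac{3}{2}}$ match the claimed symmetric $\bN^{-\frac{1}{2}}\bD^{\dag-1}\bGamma^{\dag}\bD^{\dag-1}\bN^{-\frac{1}{2}}$, since $\bD^{\dag}$ and $\bGamma^{\dag}$ do not commute in general.

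The actual dominant term in the $\hat{\bH}_q$ case, the one the paper carries through, is
\begin{align*}
\sqrt{T}\Bigl(\frac{\bW'\bF^{0\dag}}{T}\Bigr)\tilde{\bQ}^{\dag-1}\hat{\bLambda}^{\dag-1}\bN^{\frac{1}{2}}\Bigl(\bN^{-\frac{1}{2}}\tilde{\bQ}^{\dag}\bN^{\frac{1}{2}}\Bigr)\frac{\bN^{-\frac{1}{2}}\bB^{0\dag\prime}\bE^{\dag\prime}\bE^{\dag}\bB^{0\dag}\bN^{-\frac{1}{2}}}{T}\Bigl(\bN^{\frac{1}{2}}\tilde{\bQ}^{\dag\prime}\bN^{-\frac{1}{2}}\Bigr)\bN\hat{\bLambda}^{\dag-1}\bN^{-\frac{1}{2}}.
\end{align*}
Here $\hat{\bLambda}^{\dag-1}\bN^{\frac{1}{2}}=\bN^{-\frac{1}{2}}\bD^{\dag-1}$ on the left and $\bN\hat{\bLambda}^{\dag-1}\bN^{-\frac{1}{2}}=\bD^{\dag-1}\bN^{-\frac{1}{2}}$ on the right, so the $(k,\ell)$ entry scales as $\sqrt{T}N^{-\frac{\alpha_k+\alpha_\ell}{2}}$, which survives under $\sqrt{T}N^{-\alpha_r}\to c_2$ precisely on the $\alpha_k=\alpha_\ell=\alpha_r$ block, yielding the Hadamard mask $(\be_{\alpha_r}\be_{\alpha_r}')$. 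The cancellation induced by $\tilde{\bH}_q^{\dag}=\tilde{\bQ}^{\dag-1}$ is therefore not a sign flip on one power of $\bN$: it trades the single $\hat{\bLambda}^{\dag-2}$ on the right for $\hat{\bLambda}^{\dag-1}$ on each side, with a $\tilde{\bQ}^{\dag-1}$ inserted on the left, and that reallocation is exactly what produces the symmetric $\bD^{\dag-1}(\cdot)\bD^{\dag-1}$ and the surviving exponent $-\frac{1}{2}(\alpha_k+\alpha_\ell)$. (A smaller slip: the residual term in your opening expansion should read $\bF^{0\dag}(\tilde{\bH}^{\dag}-\tilde{\bH}_q^{\dag})$, not $\bF^{0\dag}(\tilde{\bH}_b^{\dag}-\tilde{\bH}_q^{\dag})$, since the bracketed quantity already equals $\hat{\bF}^{\dag}-\bF^{0\dag}\tilde{\bH}^{\dag}$.)
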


\begin{proof}[Proof of Lemma \ref{lem:bias_Hhat3}] We follow the argument of \citep[Lemma A.4]{jiang2024Mw}. 
If $\sqrt{T}N^{-\alpha_r} \to c_2$, the dominant term is
\begin{align*}
	&\sqrt{T} \left(  \frac{\bW' \bF^{0\dag}}{T}  \right)     \tilde{\bQ}^{{\dag}-1}  \hat{\bLambda}^{{\dag}-1} \bN^{\frac{1}{2}}  \left(\bN^{-\frac{1}{2}} \tilde{\bQ}^{\dag}\bN^{ \frac{1}{2}}\right)    \frac{\bN^{-\frac{1}{2}} \bB^{0\dag\prime} \bE^{\dag\prime} \bE^{\dag} \bB^{0\dag}\bN^{-\frac{1}{2}}}{T} \bN^{\frac{1}{2}} \tilde{\bQ}^{\dag\prime}\bN^{-\frac{1}{2}}  \bN  \hat{\bLambda}^{{\dag}-1} \bN^{-\frac{1}{2}}\\
	& =  c_2  \hat{\bSigma}_{\bW \hat{\bF} }   \bar{\bG}^\dag +o_{p^{\dag}}(1),
\end{align*}
in probability. Thus, it completes the proof.
\end{proof}

\begin{proof}[Proof of Theorem \ref{thm:bias_H3}]
By the definition of $\tilde{\bH}_{q}^{\dag}$, we have
\begin{align*}
	&\hat\bF^{\dag\prime}(\hat\bF^{\dag}-\bF^{0\dag}\tilde{\bH}_{q}^{\dag})\tilde{\bH}_{q}^{{\dag}-1}\bgamma^0= \mathbf{0}.
\end{align*}
Combining with Lemmas \ref{lem:F*epsilon} and \ref{lem:bias_Hhat3},
\begin{align*}
	& \sqrt{T}(\hat\bdelta^{\dag} - \bdelta_{\hat{\bH}_q^{\dag}})  \\
	&= (T^{-1}\hat\bZ^{\dag\prime}\hat\bZ^{\dag})^{-1} T^{-\frac{1}{2}}\hat\bZ^{\dag\prime}\bepsilon^{\dag}
	- (T^{-1}\hat\bZ^{\dag\prime}\hat\bZ^{\dag})^{-1} T^{-\frac{1}{2}}\hat\bZ^{\dag\prime}(\hat\bF^{\dag}-\bF^{0\dag}\tilde{\bH}_q^{\dag})\tilde{\bH}_q^{{\dag}-1}\bgamma^{0\dag}  \\
	&  = (T^{-1} \bZ^{0\dag\prime} \bZ^{0\dag})^{-1} T^{-\frac{1}{2}}  \bZ^{0\dag\prime}\bepsilon^{\dag}
	-  (T^{-1} \bZ^{0\dag\prime} \bZ^{0\dag})^{-1} T^{-\frac{1}{2}}\hat\bZ^{\dag\prime}(\hat\bF^{\dag}-\bF^{0\dag}\tilde{\bH}_q^{\dag})\tilde{\bH}_q^{{\dag}-1}\bgamma^{0\dag} \\
	& 
	\quad \quad +
	O_{p^{\dag}}(\Delta_{NT})+
	O_{p^{\dag}} \left(\sqrt{T} N^{-\frac{3}{2}\alpha_r}\right)+O_{p^{\dag}}\left( \frac{N^{1-\alpha_r}}{\sqrt{T}} \right)+O_{p^{\dag}}\left( N^{\frac{1}{2}- \alpha_r} \right)\\
	& \CDb
	N\left(c_2 \bar{\bkappa}_{\bdelta^*}, \bSigma_{\bdelta}\right)
\end{align*}
in probability, where 
\begin{align*}
	\bar{\bkappa}_{\bdelta^*} = \bSigma_{\bZ^0 \bZ^0}^{-1}\binom{\mathbf{0}}{\bSigma_{\bW \bF^0}  \bar{\bG}}\bH_0^{-1} \bgamma^*. 
\end{align*}  
\end{proof}

\begin{proof}[Proof of Theorem \ref{thm:bias_H}]
We consider
\begin{align*}
	& \sqrt{T}( \hat{\bdelta}^{\dag}- \bdelta^{0\dag})=
	\left( 
	\begin{array}{c}
		\sqrt{T}(  \tilde{\bH}_q^{\dag-1}- \bI_r)\bgamma^{0\dag} \\
		\nonumber
		\mathbf{0}
	\end{array}
	\right)
	+ \sqrt{T}(\hat\bdelta^{\dag} - \bdelta_{\hat{\bH}_q^{\dag}}).
	\label{delta_est}
\end{align*} 
Although the explicit expansion of $ \sqrt{T}(  \tilde{\bH}_q^{\dag-1}- \bI_r)\bgamma^{0\dag}$ is unknown, we know that $\| \sqrt{T}(  \tilde{\bH}_q^{\dag-1}- \bI_r)\bgamma^{0\dag}\|_{\F}=O_{p^{\dag}}(\sqrt{T}\Delta_{NT})$, in probability, where we used Lemma \ref{lem:Rotation}(v). Furthermore,  under the conditions in Theorem \ref{thm:bias_H}, this term is not larger than $O_{p^{\dag}}(\sqrt{T}N^{\frac{1}{2}\alpha_1-\frac{3}{2}\alpha_r})$. 
We assume that the first bias term $(\sqrt{T}\bgamma^{0\dag\prime}(  \tilde{\bH}_q^{\dag\prime-1}- \bI_r),  \mathbf{0}^{\prime} )' \CPb c_1\bh_{\bgamma}^\dag$, in probability and $\plim \bh_{\bgamma}^\dag =\bh_{\bgamma^*}$, when $\sqrt{T}N^{\frac{1}{2}\alpha_1-\frac{3}{2}\alpha_r} \to c_1 \in [0,\infty)$ as $N, T \to \infty$. The second bias is the same as that in $ \sqrt{T}( \hat{\bdelta}^{\dag}- \bdelta_{\hat{\bH}_q^{\dag}})$, given by $ c_2\bar{\bkappa}_{\bdelta^*}$.
Thus, we have
\begin{align*}
	&\sqrt{T}( \hat{\bdelta}^{\dag}- \bdelta^{0\dag})  \CDb
	N\left(c_1 \bh_{\gamma^{*}}+ c_2 \bar{\bkappa}_{\bdelta^*} ,\bSigma_{\bdelta}\right),
\end{align*}
in probability.
\end{proof}

\end{document}